\documentclass[lettersize,journal]{IEEEtran}
\usepackage{cite}
\usepackage{amsmath,amssymb,amsfonts}
\usepackage{algorithmic}
\usepackage{algorithm}
\usepackage{array}
\usepackage[caption=false,font=normalsize,labelfont=sf,textfont=sf]{subfig}
\usepackage{textcomp}
\usepackage{stfloats} 
\usepackage{graphicx}
\usepackage{bm}
\usepackage{cases}
\usepackage{multirow}
\usepackage{xcolor}
\usepackage{url}
\usepackage{verbatim}
\newtheorem{theorem}{\bf{\textit{Theorem}}}[section]
\newtheorem{lemma}{\bf{\textit{Lemma}}}[section]
\usepackage{graphicx}
\usepackage{cite}
\usepackage{hyperref}
\begin{document}

\title{Fairness-Aware Beamforming for Polarimetric ISAC Systems with Polarization-Reconfigurable Antennas}

\author{Weijie Xiong, Jingran Lin, Di Jiang, Cunhua Pan, Hongli Liu, Kai Zhong, and Qiang Li
\thanks{This work was supported by the National Natural Science Foundation of China (No. 62501112) and in part by the China Postdoctoral Science Foundation (No. 2025M773511). \textit{(Corresponding author: Jingran Lin)}.}
\thanks{Jingran Lin and Qiang Li are with the School of Information and Communication Engineering, University of Electronic Science and Technology of China, Chengdu 611731, China, the Laboratory of Electromagnetic Space Cognition and Intelligent Control, Beijing 100083, China, and also with the Tianfu Jiangxi Laboratory, Chengdu, Sichuan 641419, China (e-mail: jingranlin@uestc.edu.cn; lq@uestc.edu.cn).}
\thanks{Weijie Xiong, Di Jiang, Hongli Liu, Kai Zhong are with the School of Information and Communication Engineering, University of Electronic Science and Technology of China, Chengdu 611731, China (e-mail: 202311012313@std.uestc.edu.cn; dijiang@uestc.edu.cn; hongliliu@std.uestc.edu.cn; 201921011206@std.uestc.edu.cn).}
\thanks{Cunhua Pan is with the National Mobile Communications Research Laboratory, Southeast University, China (e-mail: cpan@seu.edu.cn).}
}

\markboth{Journal of \LaTeX\ Class Files,~Vol.~14, No.~8, August~2021}%
{Shell \MakeLowercase{\textit{et al.}}: A Sample Article Using IEEEtran.cls for IEEE Journals}


\maketitle

\begin{abstract}
Polarization diversity offers significant flexibility for enhancing integrated sensing and communications (ISAC). However, conventional dual-polarized arrays typically require dedicated radio-frequency (RF) chains for each polarization branch, leading to prohibitive hardware costs. To address this, polarization-reconfigurable (PR) antennas have emerged as a cost-effective alternative, enabling polarization flexibility with reduced hardware complexity by driving two polarization branches with a single RF chain. In this paper, we investigate fairness-aware beamforming for ISAC systems equipped with PR antennas. Specifically, we jointly optimize the transmit beamforming and PR control coefficients to maximize the minimum signal-to-interference-plus-noise ratio (SINR) for communication users and the minimum signal-to-clutter-plus-noise ratio (SCNR) for sensing targets. The resulting problem is highly nonconvex and nonsmooth due to the strong coupling among optimization variables in the max–min objective, as well as the nonconvex spherical constraints imposed by the PR antennas. To tackle this, we derive an equivalent smooth reformulation by introducing auxiliary variables and transforming the minimum operators into inequality constraints. Subsequently, we develop an exact-penalty product Riemannian manifold gradient descent (EP-PRMGD) algorithm, which integrates an exact penalty method with Riemannian optimization to guarantee convergence to a Karush–Kuhn–Tucker (KKT) point. Numerical results demonstrate that the proposed PR-enabled ISAC scheme achieves performance comparable to dual-polarized architectures while utilizing only half the RF chains, thereby validating its effectiveness in balancing fairness and hardware efficiency.
\end{abstract}

\begin{IEEEkeywords}
Integrated sensing and communications (ISAC), polarization-reconfigurable (PR) antennas, fairness-aware beamforming, max-min optimization, Riemannian manifold.
\end{IEEEkeywords}

\section{Introduction}
The evolution of wireless communication systems toward sixth-generation (6G) networks and beyond has highlighted the importance of integrated sensing and communications (ISAC), which unifies radar sensing and communication functionalities within a shared spectrum and hardware platform \cite{ghosh2025unified}. This paradigm responds to the growing demand for spectral efficiency, cost-effective hardware implementations, and support for emerging applications such as autonomous vehicles, smart cities, and augmented reality \cite{pan2020multicell,gong2024toward}. By enabling concurrent sensing and communication, ISAC offers substantial benefits, including improved resource utilization, reduced hardware complexity, and enhanced system scalability \cite{liu2025fundamental,pan2021reconfigurable}. However, fully realizing these gains relies on efficient resource allocation, since competition for shared resources introduces critical tradeoffs between sensing and communication \cite{dong2022sensing}.

Among various resource allocation strategies, transmit beamforming plays a central role in ISAC systems \cite{li2024transmit}. However, conventional transmit beamforming designs often optimize aggregate objectives, such as sum-rate maximization \cite{zhang2024efficient} or total mean-square-error minimization \cite{wang2024unified}, which inherently favor strong links and targets. Consequently, resources are biased toward advantageous channels, leaving cell-edge users and weak targets unable to meet their basic requirements. To alleviate these drawbacks, fairness-aware beamforming has been widely investigated to improve the performance of disadvantaged nodes and enforce a quality-of-service floor \cite{dou2024integrated,sobhi2024joint,ji2023robust,shakoor2025max,zhang2024max}. Specifically, \cite{dou2024integrated} maximized the minimum user signal-to-interference-plus-noise ratio (SINR) subject to sensing constraints, while \cite{sobhi2024joint} proposed a fairness-aware hybrid beamforming scheme for millimeter-wave systems. To account for channel uncertainties, \cite{ji2023robust} developed robust max-min beamforming designs under worst-case channel state information (CSI) errors. Furthermore, \cite{shakoor2025max,zhang2024max} exploited intelligent reflecting surfaces to enhance cell-edge signal strength and balance the trade-off between communication rate and sensing accuracy under fairness constraints. However, strict fairness requirements are more demanding than aggregate objectives because they must ensure worst-case user and target performance rather than optimize an average or sum utility. As a result, satisfying worst-case constraints can quickly exhaust the spatial degrees of freedom (DoFs), and further reallocation may incur a noticeable loss in overall system capacity \cite{shakoor2025max}.

To alleviate these drawbacks, polarization diversity has been recognized as an effective ways to enrich the design space for resource allocation. Crucially, it provides controllable degrees of freedom (DoFs) in the polarization domain without increasing the physical array aperture \cite{xia2025ris}. By exploiting the orthogonality between polarization components (e.g., horizontal and vertical), this approach introduces an additional dimension for signal processing. This capability allows the system to combat communication fading and and reveal target features that cannot be effectively captured by conventional sensing schemes \cite{xia2025ris}. Substantial research has validated the benefits of polarization in both communication and sensing applications \cite{wu2023physical,yang2022dual,shahzadi2024dual,yin2021clutter,xia2025ris,zhang2024dual}. In wireless communications, \cite{wu2023physical,yang2022dual} showed that dual-polarized MIMO architectures can reduce outage probability and improve channel capacity in multipath environments. In radar sensing, \cite{shahzadi2024dual,yin2021clutter} utilized polarimetric scattering information to improve target parameter estimation accuracy and enhance detection performance in clutter-rich environments. More recently, \cite{xia2025ris,zhang2024dual} demonstrated that dual-polarized arrays can jointly improve user communications and target localization in ISAC systems.

Despite these performance gains, fully exploiting polarization diversity typically requires dual-polarized antennas with dedicated radio frequency (RF) chains for each polarization branch. This architecture essentially doubles the number of RF chains and analog-to-digital converters (ADCs) compared to conventional systems \cite{castellanos2023linear}. As antenna arrays scale up to meet the strict performance requirements of 6G, such hardware redundancy leads to prohibitive costs and circuit complexity, limiting their practical deployment \cite{lee2025integrated}.

To overcome this bottleneck, polarization-reconfigurable (PR) antennas have emerged as a cost-effective alternative. Unlike fixed dual-polarized antennas, a PR antenna element is driven by a single RF chain but can dynamically adjust its polarization state through tunable analog components, thereby reducing the RF hardware cost by half while retaining polarization flexibility \cite{castellanos2023linear}. This technology has been successfully applied in various wireless scenarios to enhance system performance without increasing hardware complexity \cite{castellanos2023linear, castellanos2021mimo, zhou2024polarforming, shao2025polarized, lee2025integrated}. For instance, \cite{castellanos2023linear,castellanos2021mimo} demonstrated that PR arrays can substantially enhance channel capacity in MIMO communications while maintaining a single RF chain per antenna by aligning signal polarization with the channel eigenmodes. From a theoretical perspective, \cite{zhou2024polarforming} established a polarforming framework to systematically model PR-enabled channels, quantifying the performance gains achievable with reduced hardware overhead. Furthermore, \cite{shao2025polarized} explored polarized movable antennas to optimize wireless channels, while \cite{lee2025integrated} recently utilized PR arrays to achieve promising gains in target detection without the cost of full dual-polarized architectures.

While PR antennas have demonstrated significant potential in ISAC systems, existing studies have predominantly focused on aggregate performance metrics. Consequently, the application of PR technology to address the critical fairness issue, specifically, guaranteeing service quality for worst-case users and targets, remains unexplored. This gap is significant because the additional DoFs offered by PR antennas provide a hardware-efficient mechanism to compensate for the intense resource competition in max-min fairness designs. To bridge this gap, this paper proposes a novel fairness-aware beamforming framework for PR-assisted ISAC systems, aiming to achieve a favorable trade-off between worst-case performance and hardware complexity. The main contributions of this work are summarized as follows:
\begin{itemize}
\item Unlike existing works \cite{castellanos2023linear, castellanos2021mimo, zhou2024polarforming, shao2025polarized, lee2025integrated} that primarily focus on aggregate performance metrics in ISAC systems, we propose a fairness-aware beamforming framework equipped with PR antennas. By exploiting the polarization DoFs, our design aims to guarantee service equity for the worst-case communication users and sensing targets, while achieving a hardware-efficient architecture that requires only half the RF chains compared to conventional dual-polarized systems.

\item We formulate the fairness-aware design as a highly non-convex and non-smooth max-min optimization problem. Specifically, we seek to maximize a weighted combination of the minimum user SINR and the minimum target SCNR. This is achieved by jointly optimizing the transmit beamforming matrix, the polarization combining matrices, and the receive filter, subject to the total transmit power budget the spherical constraints of PR antennas.

\item To solve the formulated problem, instead of adopting conventional alternating optimization (AO) approaches \cite{lee2025integrated,zhang2024dual}, which artificially decouple the variables and generally lack strict convergence guarantees \cite{shi2020penalty}, we develop a unified exact penalty product Riemannian manifold gradient descent (EP-PRMGD) algorithm. The max-min objective is first smoothed via an epigraph reformulation, and the resulting inequality constraints are strictly enforced through an exact penalty (EP) method. We then show that these nonconvex constraints jointly characterize a product Riemannian manifold. By incorporating the constraints into the problem geometry, the original formulation is transformed into an unconstrained optimization problem over a unified manifold. This enables the simultaneous update of all highly coupled variables and admits a theoretical guarantee of convergence to a Karush-Kuhn-Tucker (KKT) point.

\end{itemize}

The remainder of this paper is organized as follows. Section II introduces the system model. Section III details the problem reformulation, the proposed EP-PRMGD algorithm, and the corresponding convergence analysis. Section IV presents the numerical results, and Section V concludes the paper.

The following notations are used throughout this paper. Vectors and matrices are denoted by bold lowercase and bold uppercase letters, respectively. $(\cdot)^T$, $(\cdot)^H$, and $(\cdot)^*$ denote the transpose, Hermitian transpose, and complex conjugate. $\mathbf{I}_N$ denotes the $N\times N$ identity matrix. $\mathbb{C}^{M\times N}$ and $\mathbb{R}^{M\times N}$ denote the sets of $M\times N$ complex- and real-valued matrices, respectively. $\mathcal{CN}(\boldsymbol{\mu},\mathbf{\Sigma})$ denotes the circularly symmetric complex Gaussian distribution with mean $\boldsymbol{\mu}$ and covariance $\mathbf{\Sigma}$, and $\mathbb{E}\{\cdot\}$ denotes expectation. $\mathrm{Tr}(\cdot)$, $\|\cdot\|_F$, $\|\cdot\|_2$, and $|\cdot|$ denote the trace, Frobenius norm, Euclidean norm, and absolute value, respectively. $\mathrm{diag}(\mathbf{a})$ forms a diagonal matrix with $\mathbf{a}$ on its main diagonal, and $\mathrm{blkdiag}(\cdot)$ forms a block-diagonal matrix from its arguments. $\Re\{\cdot\}$ denote the (element-wise) real-part operators and $\otimes$ denotes the Kronecker product.

\section{System Model and Problem Formulation}

\begin{figure*}[!t]
  \begin{center}
  \includegraphics[ width=0.85\textwidth]{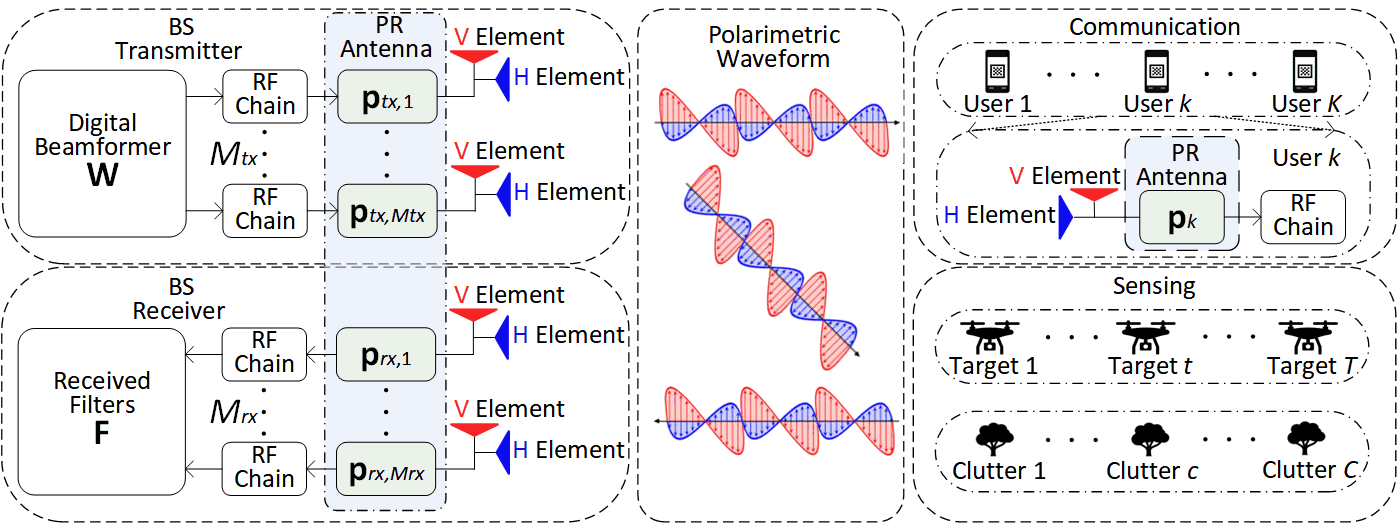}\\
  \caption{The polarimetric ISAC system with PR antennas.}\label{systemmodel}
  \end{center}
   \vspace{-10pt}
\end{figure*}

Consider a polarimetric ISAC system, as depicted in Fig. \ref{systemmodel}, where a dual-functional base station (BS) simultaneously serves $K$ single-antenna communication users and performs sensing of $T$ point-like targets in the presence of $C$ point-like clutter sources. The transmitting BS employs uniform linear arrays (ULAs) with $M_{tx}$ transmit antennas and $M_{rx}$ receive antennas. A key feature is that both the BS and the users are equipped with PR antennas. Each PR antenna employs a single RF chain to drive two orthogonally polarized elements, horizontal (H) and vertical (V), thereby enabling polarimetric operation with low hardware cost and enhancing the flexibility and performance of both communication and sensing \cite{castellanos2023linear}.

\subsection{Transmit Signal Model}
At baseband, the BS generates two independent sets of data streams: a communication stream $\mathbf{s}_c = [s_1, \dots, s_K]^T \in \mathbb{C}^{K}$ intended for the $K$ users, and a dedicated sensing stream $\mathbf{s}_r = [s_{K+1}, \dots, s_{K+L_r}]^T \in \mathbb{C}^{L_r}$ comprising $L_r$ independent radar waveforms. The entries of each stream are independent and identically distributed (i.i.d.), satisfying $\mathbb{E}\{{\bf s}_c {\bf s}_c^H\} = {\bf I}_K$ and $\mathbb{E}\{{\bf s}_r {\bf s}_r^H\} = {\bf I}_{L_r}$, while ${\bf s}_c$ and ${\bf s}_r$ are mutually independent. The streams are processed by their respective digital beamforming matrices ${\bf W}_c =[{\bf w}_1,...,{\bf w}_K]\in \mathbb{C}^{M_{tx} \times K}$ and ${\bf W}_r =[{\bf w }_{K+1},...,{\bf w}_{K+L_r}] \in \mathbb{C}^{M_{tx} \times L_r}$. The resultant signal at the RF-chain inputs is,
\begin{equation}
    {\bf \hat x} = {\bf W}_c {\bf s}_c + {\bf W}_r {\bf s}_r = {\bf W}{\bf s} \in \mathbb{C}^{M_{tx}},
\end{equation}
where the overall beamformer is ${\bf W} = [{\bf W}_c, {\bf W}_r] \in \mathbb{C}^{M_{tx} \times (K+L_r)}$ and the transmit symbol vector is ${\bf s} = [{\bf s}_c^T, {\bf s}_r^T]^T \in \mathbb{C}^{K+L_r}$.

The vector ${\bf \hat x}$ is then applied to $M_{tx}$ RF chains and fed to the PR transmit antennas. Each antenna employs a single-port architecture, in which one RF chain drives two orthogonally polarized elements: H and V. Polarization reconfigurability is realized by controlling the relative amplitudes of the signals delivered to these two branches. Such a mechanism is practically implementable using mature analog RF circuitry, e.g., an analog combiner based on a varactor-controlled network \cite{gao2006polarization,cai2016continuously} or variable attenuators \cite{zhou2024polarforming}, which confirms the practical feasibility of the adopted PR antenna architecture. Unlike conventional dual-polarized ISAC systems that require separate RF chains for the H and V branches, the proposed architecture enables polarization reconfiguration with a single shared RF chain, thereby significantly reducing hardware cost \cite{chen2025hybrid}. Specifically, the polarization control can be modeled by a polarization combining matrix, defined as \cite{castellanos2023linear},
\begin{equation}
    {\bf P}_{tx} = \text{blkdiag}({\bf p}_{tx,1}, \dots, {\bf p}_{tx,M_{tx}}) \in \mathbb{R}^{2M_{tx} \times M_{tx}}, \label{commar1}
\end{equation}
where ${\bf p}_{tx,m_{tx}} \in \mathbb{R}^{2 }$ is the real-valued combining vector for the $m_{tx}$-th antenna, with $m_{tx} \in [1,M_{tx}]$. Specifically, ${\bf p}_{tx,m_{tx}}=[p^{H}_{tx,m_{tx}}, p^{V}_{tx,m_{tx}}]^T$, with $p^{H}_{tx,m_{tx}} \in  \mathbb{R}$ and $p^{V}_{tx,m_{tx}} \in  \mathbb{R}$ denote the weighting coefficients for the H and V polarization components, respectively. It sets the relative power split between H and V and satisfies $\|{\bf p}_{tx,m_{tx}}\| = \sqrt{(p^{H}_{tx,m_{tx}})^2 + (p^{V}_{tx,m_{tx}})^2} = 1$.

Combining these stages, the dual-polarized transmit signal from the BS is expressed as,
\begin{equation}
    {\bf x} = {\bf P}_{tx} {\bf \hat x} = {\bf P}_{tx} {\bf W}{\bf s} \in \mathbb{C}^{2M_{tx}}.
\end{equation}

\subsection{Received Signal Model}
The BS is assumed to have perfect CSI, which can be obtained using standard estimation methods \cite{lauinger2022blind,hu2025channel,wen2022compressive}. Consequently, the dual-polarized downlink channel from the BS to user $k$ is denoted by ${\bf H}_k\in\mathbb{C}^{2\times 2M_{tx}}$, for $k\in\{1, \dots, K\}$. Each user $k$ is equipped with a single PR antenna characterized by a polarization combiner ${\bf p}_k=[p^{H}_{k}, p^{V}_{k}]^T\in\mathbb{R}^{2}$. Then, the received signal at user $k$ is therefore,
\begin{equation}
y_k = {\bf p}_k^T{\bf H}_k{\bf P}_{tx}
\left( {\bf w}_k s_k+ \sum_{j\neq k}^{K+L_r} {\bf w}_j s_j \right) + n_k,
\end{equation}
where $n_k\sim\mathcal{CN}(0,\sigma_k^2)$ is the circularly-symmetric complex Gaussian (CSCG) noise \cite{pan2020intelligent}.

For monostatic sensing, the dual-polarized channel response of target $t \in \{1, \dots, T\}$ is denoted by ${\bf G}_t\in\mathbb{C}^{2M_{rx}\times 2M_{tx}}$, and that of clutter source $i \in \{T+1, \dots, T+C\}$ by ${\bf G}_i\in\mathbb{C}^{2M_{rx}\times 2M_{tx}}$. The BS receiver is equipped with $M_{rx}$ PR antennas, represented by a polarization-combining matrix ${\bf P}_{rx} = \text{blkdiag}({\bf p}_{rx,1}, \dots, {\bf p}_{rx,M_{rx}})\in\mathbb{R}^{2M_{rx}\times M_{rx}}$ defined similarly to (\ref{commar1}), where each ${\bf p}_{rx,m_{rx}}=[p^{H}_{rx,m_{rx}}, p^{V}_{rx,m_{rx}}]^T\in\mathbb{R}^{2}$. The received echo signal for sensing is then,
\begin{equation}
{\bf y}_r = {\bf P}_{rx}^T
\left( \sum_{t=1}^{T} {\bf G}_t + \sum_{i=T+1}^{T+C} {\bf G}_i \right)
{\bf P}_{tx}{\bf W}{\bf s} + {\bf n}_r,
\end{equation}
where ${\bf n}_r\sim\mathcal{CN}({\bf 0},\sigma_r^2{\bf I}_{M_{rx}})$ is CSCG noise at the receiver.

\subsection{Channel Model}
We assume a far-field propagation environment, in which the array aperture is small relative to the propagation distance \cite{bhagavatula2010new}. Under this assumption, the incident wavefront of each path can be approximated as planar across the array. The path’s complex gain and polarization transformation are identical for all antenna elements, while the array response is characterized by element-dependent phase shifts determined by the array geometry and the path’s angle of departure or arrival \cite{castellanos2023linear}.

Following \cite{bhagavatula2010new,lee2025integrated}, we adopt a classical multipath model for the communication link. Since the antenna pattern is decomposed into two orthogonal linear polarization components, the field response matrix (FRM) of the dual-polarized channel for the $l$-th path for user $k$ is,
\begin{equation}
{\bf A}(\theta_{k,l}) =  {\bf a}(\theta_{k,l}) \otimes {\bf I}_2 \in \mathbb{C}^{2M_{tx} \times 2},
\end{equation}
where ${\bf a}(\theta_{k,l}) \in \mathbb{C}^{M_{tx}}$ is the transmit steering vector given as,
\begin{equation}
{\bf a}(\theta_{k,l})
= [1,e^{-j\frac{2\pi d}{\lambda}\sin(\theta_{k,l})}\ldots e^{-j\frac{2\pi d}{\lambda}(M_{tx}-1)\sin(\theta_{k,l})}]^T, \label{steeringv}
\end{equation}
where $d$ is the inter-element spacing, $\lambda$ is the wavelength, and $\theta_{k,l}$ denotes the AoD of the $l$-th path of user $k$. Assume that the channel from the BS to user $k$ contains $L_k$ paths. The communication channel is then expressed as \cite{lee2025integrated},
\begin{equation}
{\bf H}_k = \frac{1}{\sqrt{L_k}}\sum_{l=1}^{L_k} \beta_{k,l}{\bf J}_{k,l} {\bf A}^{T}(\theta_{k,l}), \label{define2}
\end{equation}
where $\beta_{k,l}\in\mathbb{C}$ is the complex path gain, and ${\bf J}_{k,l}\in \mathbb{C}^{2 \times 2}$ denotes the depolarization matrix that models scattering-induced polarization coupling, given as \cite{lee2025integrated},
\begin{equation}
    {\bf J}_{k,l} =\frac{1}{\sqrt{1 +\chi_{k,l} }}\begin{bmatrix} e^{j \alpha_{k,l}^{HH} } & \sqrt{\chi_{k,l}}e^{j \alpha_{k,l}^{HV} } \\ \sqrt{\chi_{k,l}}e^{j \alpha_{k,l}^{VH} } & e^{j \alpha_{k,l}^{VV} } \end{bmatrix}, \label{leakege}
\end{equation}
where $\chi_{k,l}\ge0$ is the cross-polarization discrimination (XPD) factor that quantifies cross-polar energy leakage during propagation \cite{song2015adaptive}, and ${ \alpha_{k,l}^{AB} (A,B\in\{H,V\})}$ denote the phase shifts from polarization $A$ to $B$.

Without loss of generality, a single-path model is adopted for sensing, since multi-bounce reflections typically suffer severe attenuation. We therefore assume that only the line-of-sight (LoS) path from the BS transmitter to the target/clutter and the return path from the target/clutter to the BS receiver are detectable. Let $q\in\{t,i\}$ denote either the $t$-th target or the $i$-th clutter. The FRM from the BS to $q$ and the FRM from $q$ to the BS are respectively given as,
\begin{equation}
\begin{aligned}
& {\bf A}_{tx}(\theta_{q}) =  {\bf a}_{tx}(\theta_{q}) \otimes {\bf I}_2   \in \mathbb{C}^{2M_{tx} \times 2},\\
& {\bf A}_{rx}(\theta_{q}) = {\bf a}_{rx}(\theta_{q}) \otimes {\bf I}_2   \in \mathbb{C}^{2M_{rx} \times 2},
\end{aligned}
\end{equation}
where ${\bf a}_{tx}(\theta_{q})\in \mathbb{C}^{M_{tx}}$ and ${\bf a}_{rx}(\theta_{q})\in \mathbb{C}^{M_{rx}}$ are steering vectors defined as in \eqref{steeringv}. Hence, the monostatic sensing channel from the transmitter to $q$ and back to the receiver is,
\begin{equation}
{\bf G}_{q} = \beta_{q} {\bf A}_{rx}(\theta_{q}){\bf J}_{q}{\bf A}^{T}_{tx}(\theta_{q}),
\end{equation}
where $\beta_{q}\in\mathbb{C}$ denotes the complex two-way path gain, and ${\bf J}_{q}\in \mathbb{C}^{2 \times 2}$ is the depolarization matrix, defined same to \eqref{leakege}.

\subsection{Performance Metrics}
To evaluate the communication and sensing performance, the SINR and signal-to-clutter-plus-noise ratio (SCNR) are adopted as the respective performance metrics. Based on the above discussion, the SINR for user $k$ is expressed as
\begin{equation}
\text{SINR}_k = \frac{| {\bf p}_k^T {\bf  H}_k {\bf P}_{tx} {\bf w}_k |^2}
{\sum_{j \neq k}^{K+L_r} | {\bf p}_k^T {\bf  H}_k {\bf P}_{tx} {\bf w}_j |^2 + \sigma_k^2}.
\end{equation}

For sensing, let ${\bf F}=[{\bf f}_1,\ldots,{\bf f}_T] \in \mathbb{C}^{M_{rx} \times T}$ denote the collection of linear filters ${\bf f}_t \in \mathbb{C}^{M_{rx}}$ applied at the receiver side of the BS to maximize the output SCNR. The combined signal for target $t$ is given by,
\begin{equation}
r_t = {\bf f}_t^H{\bf y}_r = {\bf f}_t^H{\bf P}_{rx}^T ({\bf  G}_t + \sum_{q \neq t}^{T+C} {\bf  G}_q ) {\bf P}_{tx}{\bf W}{\bf s} + {\bf f}_t^H {\bf n}_r,
\label{revicesignal}
\end{equation}
Hence, the SCNR for target $t$ is expressed as,
\begin{equation}
    \text{SCNR}_t =\frac{\|{\bf f}_t^H{\bf P}_{rx}^T {\bf  G}_t {\bf P}_{tx}{\bf W}\|_2^2}
{  \sum_{q \neq t}^{T+C} \|{\bf f}_t^H {\bf P}_{rx}^T {\bf  G}_q {\bf P}_{tx}{\bf W} \|_2^2 +  \sigma_r^2 \|{\bf f}_t\|_2^2 } .
\end{equation}

\subsection{Problem Formulation}
Our objective is to jointly optimize the transmit beamforming matrix $\bf W$, the polarization combining matrices ${\bf P}_{tx}$ and ${\bf P}_{rx}$, the per-user combiners $\{{\bf p}_k\}_{k=1}^K$, and the receive filter matrix $\bf F$ to achieve a balanced trade-off between communication and sensing performance. Specifically, we pursue a design that guarantees fairness among communication users and sensing targets. To this end, we adopt the worst-case (minimum) SINR and SCNR as the respective performance metrics. The joint design problem is formulated as,
\begin{subequations}
\begin{align}
\max_{ \left\{\begin{subarray}{c}
{\bf W},{\bf P}_{tx},{\bf P}_{rx},\\
\{{\bf p}_k\}_{k=1}^K,{\bf F}
\end{subarray} \right\} }&
\;(1-\rho)\min_{k\in[1,K]} \text{SINR}_k
+\rho\min_{t\in[1,T]}\text{SCNR}_t, \label{objori}\\
\text{s.t.}\quad & \text{Tr}({\bf W}{\bf W}^H) =P,\label{transmitpowcon}\\ 
&\|{\bf p}_{tx,{m_{tx}}}\|_2 = 1,\quad \forall m_{tx} \in [1,M_{tx}],\label{conpo1}\\
&\|{\bf p}_{rx,{m_{rx}}}\|_2 = 1,\quad  \forall m_{rx} \in [1,M_{rx}],\label{conpo2}\\
& \|{\bf p}_k\|_2 =1,\quad  \forall k \in [1,K],\label{conpo3}
\end{align}
\label{proFor}%
\end{subequations}
where \eqref{transmitpowcon} is the transmit power constraint; \eqref{conpo1}–\eqref{conpo3} are spherical constraints on the polarization vectors. Here, $P$ denotes the total transmit power, and the parameter $\rho \in [0,1]$ controls the trade-off between communication and sensing performance. Generally, problem \eqref{proFor} is challenging for the following reasons: (i) the constraints \eqref{transmitpowcon}–\eqref{conpo3} are nonconvex; (ii) the pointwise-minimum objective \eqref{objori} is nonsmooth; (iii) the SINR/SCNR terms lead to a fractional, nonconvex objective; and (iv) the decision variables are highly coupled within the objective.

In general, problem (\ref{proFor}) is inherently non-convex and non-smooth. While the AO method is widely used for such coupled ISAC designs \cite{lee2025integrated,zhang2024dual}, its artificial block-wise decoupling ignores the strong interplay among variables. Consequently, AO often gets trapped in the limit points and lacks strict convergence guarantees for the original problem \cite{shi2020penalty}. To overcome these limitations, we propose a unified framework that jointly optimizes all variables. First, we smooth the max–min objective by transforming the minimum terms into inequality constraints, which are then handled via an exact penalty (EP) approach. Furthermore, instead of treating the non-convex constraints \eqref{transmitpowcon}–\eqref{conpo3} separately, we recognize that they collectively constitute a product Riemannian manifold. By integrating all highly coupled variables into a single point on this curved space, we develop the EP-PRMGD algorithm to update them simultaneously. This direct and joint optimization completely preserves the structure of the problem, thereby ensuring a theoretical guarantee of convergence to a Karush-Kuhn-Tucker (KKT) point \cite{boyd2004convex}. The details are derived below.

\section{Problem Reformulation and Proposed Method}

\subsection{Problem Reformulation}
To begin with, by applying the epigraph reformulation \cite{boyd2004convex}, problem \eqref{proFor} can be equivalently written as the following smooth optimization problem,
\begin{subequations}
\begin{align}
\max_{ {\bf W},{\bf P}_{tx},{\bf P}_{rx},
\{{\bf p}_k\}_{k=1}^K,{\bf F},a,b}&(1-\rho)a+\rho b, \\
\text{s.t.}\quad  & a \le \text{SINR}_k, \quad\forall k \in [1,K], \label{reformuatecons1}\\ 
& b \le \text{SCNR}_t, \quad\forall t \in [1,T], \label{reformuatecons2}\\ 
& a\ge 0,\quad b \ge 0,\label{reformuatecons3}\\
& \text{\eqref{transmitpowcon}-\eqref{conpo3} are satisfied.}\notag
\end{align}
\label{reFor1}%
\end{subequations}
Equivalently, by transforming the maximization into a minimization problem, we have,
\begin{equation}
    \begin{aligned}
  \min_{ {\bf W},{\bf P}_{tx},{\bf P}_{rx},
\{{\bf p}_k\}_{k=1}^K,{\bf F},a,b}&(\rho-1)a-\rho b, \\     
\text{s.t.}\quad \text{\eqref{transmitpowcon}-\eqref{conpo3}},&   \text{\;\eqref{reformuatecons1}-\eqref{reformuatecons3} are satisfied.}\label{reform2euq}
    \end{aligned}
\end{equation}
The constraints in \eqref{transmitpowcon}--\eqref{conpo3} naturally correspond to spherical manifolds in the Riemannian space \cite{lee2018introduction}, which motivates a manifold-based treatment of \eqref{reform2euq}. However, although the epigraph reformulation removes the non-smoothness of the original max-min objective, it also introduces the inequality constraints in \eqref{reformuatecons1}--\eqref{reformuatecons3}, which lack a smooth manifold structure and thus preclude the direct application of manifold-based algorithms. To address this issue, we incorporate these inequalities into the objective via penalty terms that discourage constraint violations. Based on these observations, we propose the EP-PRMGD algorithm to efficiently solve \eqref{reform2euq}.

\subsection{Proposed EP-PRMGD Algorithm}
In this section, we propose the EP-PRMGD algorithm to solve problem \eqref{reform2euq}. We begin by applying the EP method \cite{boyd2004convex} to relax the inequality constraints and incorporate them as penalty terms within the objective function. We then construct a smooth product Riemannian manifold (PRM), which consolidates information from the individual Riemannian manifolds associated with each constraint, transforming the problem into an unconstrained optimization problem in Riemannian space. Finally, we develop the efficient PRMGD algorithm to solve this problem within the Riemannian framework.

\subsubsection{EP-Riemannian Transformation}
Directly incorporating inequality constraints into the objective does not guarantee feasibility and may induce constraint violations or numerical divergence. The EP method provides a more reliable mechanism by augmenting the objective with penalties proportional to the degree of infeasibility \cite{boyd2004convex}. Hence, \eqref{reform2euq} can be recast as an exterior-penalty problem with manifold-related constraints \eqref{transmitpowcon}–\eqref{conpo3} as follows,
\begin{equation}
\begin{aligned}
\min _{ \left\{\begin{subarray}{c}
{\bf W},{\bf P}_{tx},{\bf P}_{rx},\\
\{{\bf p}_k\}_{k=1}^K,\\{\bf F},a,b
\end{subarray} \right\} }  & \left\{\begin{array}{l} (\rho-1)a-\rho b  \\+\lambda  \textstyle\sum\limits_{k=1}^{K}   
 \max(0,{a-\text{SINR}_k}) \\+ \lambda  \textstyle\sum\limits_{t=1}^{T}   
 \max(0,{b-\text{SCNR}_t})\\+ \lambda \max(0,-a)+\lambda \max(0,-b)  \end{array}\right\}, \\
 \text { s.t. } & \text{\eqref{transmitpowcon}-\eqref{conpo3} are satisfied,}\label{reforfirsm1}
\end{aligned}
\end{equation}
where $\lambda > 0$ is the penalty factor. Note that the penalty terms $\max(0,\cdot)$ are nonsmooth and can be challenging to optimize directly. Fortunately, each is a two-term maximum that can be smoothed using the log-sum-exponential (LSE) approximation Lemma \ref{lemma1}.

\begin{lemma} \label{lemma1} \itshape
 Given $m,n \in \mathbb{R}$, it holds for any $\mu > 0$ that \text{\cite{nesterov2013introductory}},
\begin{equation}
    \begin{aligned}
        \max \{ m,n\} \le \mu \log ( \exp({\frac{m}{\mu}})+\exp({\frac{n}{\mu}})).
    \end{aligned}
\end{equation}
Moreover, the inequalities become tight as $\mu  \to  0$.
\end{lemma}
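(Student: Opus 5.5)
The plan is to prove a two-sided bound on the LSE function, which gives both the inequality and its tightness at once. Assume without loss of generality that $m \ge n$, so that $\max\{m,n\} = m$; the other case follows by symmetry, since the LSE expression is symmetric in $m$ and $n$.

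For the lower bound, factor out the dominant term:
\begin{equation*}
\mu \log\left(\exp(\tfrac{m}{\mu})+\exp(\tfrac{n}{\mu})\right) = m + \mu \log\left(1+\exp(\tfrac{n-m}{\mu})\right).
\end{equation*}
Because $\exp(\cdot)>0$ and $\mu>0$, the second term is strictly positive. This already yields $\max\{m,n\} \le \mu\log(\exp(m/\mu)+\exp(n/\mu))$ for every $\mu>0$.

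For the upper bound, use $n \le m$. Then $\exp((n-m)/\mu)\le 1$, so the extra term is at most $\mu\log 2$. This gives the sandwich
\begin{equation*}
\max\{m,n\} \le \mu \log\left(\exp(\tfrac{m}{\mu})+\exp(\tfrac{n}{\mu})\right) \le \max\{m,n\} + \mu \log 2 .
\end{equation*}

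Letting $\mu \to 0^+$ sends the gap $\mu\log 2$ to zero. Hence the approximation becomes tight, uniformly in $(m,n)$.

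The only step requiring care is choosing the factorization by the larger exponent. Factoring out that term avoids overflow and makes both bounds immediate; no earlier result from the paper is needed.
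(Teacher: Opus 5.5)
Your proof is correct and complete: factoring out $e^{\max\{m,n\}/\mu}$ gives the sandwich $\max\{m,n\}\le \mu\log\bigl(e^{m/\mu}+e^{n/\mu}\bigr)\le \max\{m,n\}+\mu\log 2$, which yields both the inequality and tightness as $\mu\to 0^+$, uniformly in $(m,n)$. The paper gives no proof of its own and only cites Nesterov, where this same two-sided bound is the standard argument, so your route matches the intended one; your explicit gap $\mu\log 2$ is a quantitative version of the lemma's tightness claim.
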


Based on this observation, \eqref{reforfirsm1} can be reformulated as a smooth EP optimization problem given as,
\begin{equation}
\begin{aligned}
\min _{ \left\{\begin{subarray}{c}
{\bf W},{\bf P}_{tx},{\bf P}_{rx},\\
\{{\bf p}_k\}_{k=1}^K,\\{\bf F},a,b
\end{subarray} \right\} }  & \left\{\begin{array}{l} (\rho-1)a-\rho b  \\+\lambda \mu \textstyle\sum\limits_{k=1}^{K}   
 \log(1+e^{(a-\text{SINR}_k)/\mu}) \\+ \lambda \mu \textstyle\sum\limits_{t=1}^{T}   \log(1+e^{(b-\text{SCNR}_t)/\mu})
 \\+ \lambda \mu \log(1+e^{-a/\mu})+\lambda \mu \log(1+e^{-b/\mu}) \end{array}\right\}, \\
 \text { s.t. } & \text{\eqref{transmitpowcon}-\eqref{conpo3} are satisfied,}\label{reforfirsm2}
\end{aligned}
\end{equation}
The updates of \(\lambda\) and \(\mu\) will be discussed in Section III.B.(3). In the \(j\)-th iteration, we first outline the process for solving the EP problem (\ref{reforfirsm2}) with a fixed penalty parameter \(\lambda^j\) and a smooth variable \(\mu^j\). Note that problem \eqref{reforfirsm2} contains the non-convex spherical constraints \eqref{transmitpowcon}–\eqref{conpo3}, which are difficult to represent and manipulate in a linear space. Manifolds, however, provide a natural framework for expressing such constraints, as they flexibly capture the underlying nonlinear structure. Their locally Euclidean geometry enables the formulation and handling of these constraints in a manner consistent with the intrinsic geometry of the problem \cite{boumal2023introduction}.

\subsubsection{Construction of the PRM}
Generally, the constraints \eqref{transmitpowcon}–\eqref{conpo3} define feasible sets in Euclidean space that are smooth submanifolds. Equipped with the induced Riemannian metric, these sets become Riemannian manifolds, which can be interpreted as restricted search domains for the optimization problem. Such manifold structures have been extensively exploited in waveform design; see \cite{xiong2025enhancing,zhong2024p2c2m}. In particular, constraint \eqref{transmitpowcon} defines the complex spherical manifold ${\cal M}_{\bf W}$,
\begin{equation}
\mathcal{M}_{\bf W}=\left\{{\bf W} \in \mathbb{C}^{M_{tx} \times (K+L_r)} | \,\|{\bf W}\|_F =\sqrt{P}\right\}. \label{manifoldW}
\end{equation}
Constraints \eqref{conpo1}–\eqref{conpo3} define spherical manifolds for the polarization vectors ${\cal M}_{{\bf p}_{tx,m_{tx}}},\forall m_{tx} \in [1,M_{tx}]$, ${\cal M}_{{\bf p}_{rx,m_{rx}}},\forall m_{rx} \in [1,M_{rx}]$, ${\cal M}_{{\bf p}_{k}},\forall k \in [1,K]$, respectively,
\begin{equation}
\begin{aligned}
&{\cal M}_{{\bf p}_{tx,m_{tx}}}=\left\{{\bf p}_{tx,m_{tx}} \in \mathbb{R}^{2}| \,\|{\bf p}_{tx,m_{tx}} \|_2=1 \right\},\\
&\mathcal{M}_{{\bf p}_{rx,m_{rx}}}=\left\{{\bf p}_{rx,m_{rx}} \in \mathbb{R}^{2}| \,\|{\bf p}_{rx,m_{rx}} \|_2=1 \right\},\\
&\mathcal{M}_{{\bf p}_k}=\left\{{\bf p}_k \in \mathbb{R}^{2}| \,\|{\bf p}_{k} \|_2=1  \right\}.\label{manifoldP}
\end{aligned}
\end{equation}
The variables $\bf F$, $a$, and $b$ are unconstrained variables that lie on flat manifolds ${\cal M}_{{\bf F}}$, ${\cal M}_{a}$, and ${\cal M}_{b}$, respectively,
\begin{equation}
\mathcal{M}_{{\bf F}}=\left\{{\bf F} \in \mathbb{C}^{M_{rx} \times T} \right\}, \;\mathcal{M}_{a}=\left\{a \in \mathbb{R} \right\}, \; \mathcal{M}_{b}=\left\{b \in \mathbb{R} \right\}. \label{manifoldab}
\end{equation}
As discussed in \cite{boumal2023introduction}, multiple Riemannian manifolds can be combined into a product Riemannian manifold (PRM) that inherits the geometric properties of all component manifolds. Accordingly, ${\cal M}_{\bf W}$, ${\cal M}_{{\bf p}_{tx,m_{tx}}},\forall m_{tx} \in [1,M_{tx}]$, $\mathcal{M}_{{\bf p}_{rx,m_{rx}}},\forall m_{rx} \in [1,M_{rx}]$, $\mathcal{M}_{{\bf p}_k},\forall k \in [1,K]$, $\mathcal{M}_{{\bf F}}$, $\mathcal{M}_{a}$, and $\mathcal{M}_{b}$ together form the PRM \(\mathcal{M}\), defined as,
\begin{equation}
\mathcal{M} = \left\{\begin{array}{l} \mathcal{M}_{\mathbf{W}} \times \prod_{m_{tx}=1}^{M_{tx}} {\cal M}_{{\bf p}_{tx,m_{tx}}}  \times \prod_{m_{rx}=1}^{M_{rx}} {\cal M}_{{\bf p}_{rx,m_{rx}}}\\ \quad \times \prod_{k=1}^{K} {\cal M}_{{\bf p}_k}\times \mathcal{M}_{{\bf F}} \times\mathcal{M}_{a} \times \mathcal{M}_{b}\end{array}\right\}. \label{product_mani}
\end{equation}
Based on the PRM structure defined in (\ref{product_mani}), we define the optimization variable ${\bm \Psi}$ as a point in the product space $\mathcal{M}$, which is represented by an ordered tuple of elements from each individual Riemannian manifold. Consequently, we collect all decision variables into ${\bm \Psi}$ as follows,
\begin{equation}
{\bm \Psi}
=  ( {\bf W}, {\bf P}_{tx}, {\bf P}_{rx}, {\bf p}_1, \ldots, {\bf p}_K, {\bf F}, a, b ),
\end{equation}
and rewrite problem (\ref{reforfirsm2}) as the following unconstrained optimization over the PRM,
\begin{equation}
  \min_{{\bm \Psi}\in\mathcal{M}} \ \phi \triangleq\left\{\begin{array}{l} (\rho-1)a-\rho b  \\+\lambda \mu \textstyle\sum\limits_{k=1}^{K}   
 \log(1+e^{(a-\text{SINR}_k)/\mu}) \\+ \lambda \mu \textstyle\sum\limits_{t=1}^{T}   \log(1+e^{(b-\text{SCNR}_t)/\mu})
  \\+ \lambda \mu \log(1+e^{-a/\mu})+\lambda \mu \log(1+e^{-b/\mu}) \end{array}\right\} .
  \label{reforoverpro}
\end{equation}
Generally, optimizing directly on the curved PMS structure of $\mathcal{M}$ in (\ref{reforoverpro}) is non-trivial due to its inherent nonlinearity. To address this, we derive the unified tangent space (UTS), which provides a localized linear vector space approximation at any given point on the manifold. This linearization allows us to adapt standard gradient-based algorithms, typically designed for flat Euclidean spaces \cite{boumal2023introduction}, to operate effectively on the Riemannian manifold.

Firstly, consider the complex spherical manifold $\mathcal{M}_{\mathbf{W}}$ in (\ref{manifoldW}). Its tangent space at a point $\mathbf{W} \in \mathbb{C}^{M_{tx} \times (K+L_r)}$ consists of all directions orthogonal to $\mathbf{W}$ with respect to the Frobenius inner product. Mathematically, this is expressed as,
\begin{equation}
\mathrm{T}_{\mathbf{W}} \mathcal{M}_{\mathbf{W}} = \left\{{\bm \xi}_{\mathbf{W}} \in \mathbb{C}^{M_{tx} \times (K+L_r)}: \Re\left\{\operatorname{Tr}\left(\mathbf{W}^H {\bm \xi}_{\mathbf{W}} \right)\right\} = 0 \right\},
\label{tsw}
\end{equation}
where ${\bm \xi}_{\mathbf{W}}$ denotes the tangent vector. Geometrically, (\ref{tsw}) describes a hyperplane tangent to $\mathcal{M}_{\mathbf{W}}$ at $\mathbf{W}$, ensuring that any motion along ${\bm \xi}_{\mathbf{W}}$ does not violate the power constraint $\|\mathbf{W}\|_F = \sqrt{P}$, thereby staying within the manifold.

For the spherical manifolds $ \mathcal{M}_{\mathbf{p}}$ in (\ref{manifoldP}), where $\mathbf{p} \in\{\mathbf{p}_{tx,m_{tx}}, \mathbf{p}_{rx,m_{rx}},\mathbf{p}_{k} \} $, the tangent space at any point $\mathbf{p}$ on this manifold consists of all vectors orthogonal to $\mathbf{p}$. This is mathematically expressed as,
\begin{equation}
\mathrm{T}_{\mathbf{p}} \mathcal{M}_{\mathbf{p}} = \left\{ {\bm \xi}_{\mathbf{p}} \in \mathbb{R}^{2} : \mathbf{p}^T {\bm \xi}_{\mathbf{p}} = 0 \right\}.
\label{tsp}
\end{equation}
Here, ${\bm \xi}_{\mathbf{p}}$ represents the tangent vector at $\mathbf{p}$. The orthogonality condition in (\ref{tsp}) ensures that any motion along ${\bm \xi}_{\mathbf{p}}$ remains feasible by preserving the unit spherical constraint on the polarization vectors.

In the case of the flat manifolds $\mathcal{M}_{\mathbf{F}}$, $\mathcal{M}_{a}$, and $\mathcal{M}_{b}$, the variables are unconstrained Euclidean spaces. Specifically, the receive filter matrix $\mathbf{F}$ resides in $\mathbb{C}^{M_{rx} \times T}$, while the auxiliary variables $a$ and $b$ reside in $\mathbb{R}$. Their tangent spaces are isomorphic to the entire vector spaces themselves, as there is no curvature to restrict the direction of updates. These tangent spaces are given by,
\begin{equation}
\begin{aligned}
&\mathrm{T}_{\mathbf{F}} \mathcal{M}_{\mathbf{F}} = \{ \boldsymbol{\xi}_{\mathbf{F}} \in \mathbb{C}^{M_{rx} \times T} \}, \\&\mathrm{T}_{a} \mathcal{M}_{a} = \{ \xi_a \in \mathbb{R} \}, \; \mathrm{T}_{b} \mathcal{M}_{b} = \{ \xi_b \in \mathbb{R} \}.
\label{tsab}
\end{aligned}
\end{equation}

By aggregating (\ref{tsw})-(\ref{tsab}), the UTS for the full variable set $\mathbf{\Psi}$ is formed as the Cartesian product of the individual tangent spaces, given as,
\begin{equation}
\mathrm{T}_{\mathbf{\Psi}} \mathcal{M}  = \left\{\begin{array}{l} \mathrm{T}_{\mathbf{W}} \mathcal{M}_{\mathbf{W}} \times \prod_{m_{tx}=1}^{M_{tx}} \mathrm{T}_{\mathbf{p}_{tx,m_{tx}}} \mathcal{M}_{\mathbf{p}_{tx,m_{tx}}} 
 \\\quad\quad\times \prod_{m_{rx}=1}^{M_{rx}} \mathrm{T}_{\mathbf{p}_{rx,m_{rx}}} \mathcal{M}_{\mathbf{p}_{rx,m_{rx}}} \\\quad\quad\times \prod_{k=1}^{K} \mathrm{T}_{\mathbf{p}_{k}} \mathcal{M}_{\mathbf{p}_{k}} 
 \\\quad \times \mathrm{T}_{\mathbf{F}} \mathcal{M}_{\mathbf{F}}  \times \mathrm{T}_{a} \mathcal{M}_{a} \times \mathrm{T}_{b} \mathcal{M}_{b}\end{array}\right\} .
\label{UTS_combined}
\end{equation}

Leveraging the geometric structure defined in (\ref{UTS_combined}), we propose an efficient product Riemannian manifold gradient descent (PRMGD) algorithm to solve (\ref{reforoverpro}), as illustrated in Fig. \ref{manistur}. The algorithm primarily performs gradient descent steps within the UTS, followed by mapping the updated solution back onto the manifold. A key feature of this approach is the utilization of retraction operations, which project the tangent update onto the manifold surface. This methodology ensures computational efficiency while rigorously addressing the geometric constraints and curvature inherent in (\ref{reforoverpro}).

\begin{figure}[!htbp]
  \begin{center}
  \includegraphics[width=2.5in]{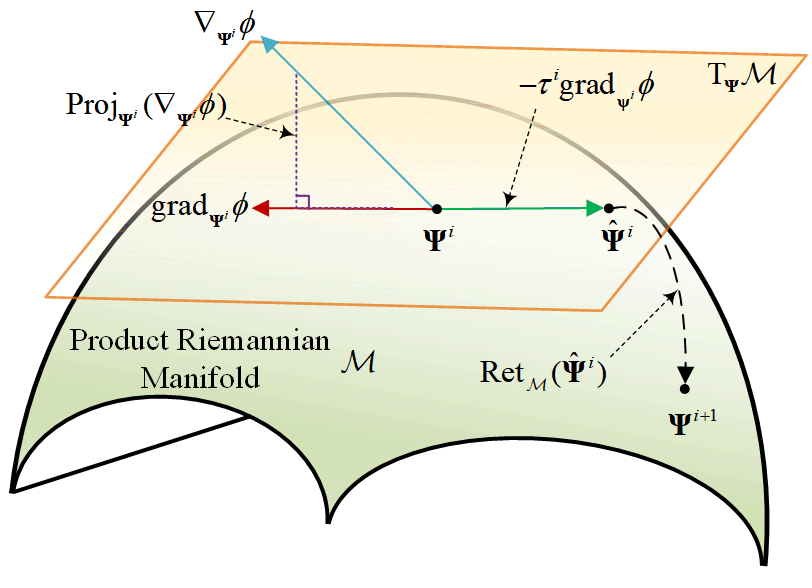}\\
  \caption{The polarimetric ISAC system with PR antennas.}\label{manistur}
  \end{center}
\end{figure}

\subsubsection{The Proposed PRMGD Algorithm}
In this subsection, we outline the steps of the PRMGD algorithm to solve the unconstrained problem defined over the product manifold $\mathcal{M}$. Let $\mathbf{\Psi}^0$ denote the initial set of optimization variables. As illustrated in Fig. \ref{manistur}, the iterative process at the \( i \)-th iteration consists of three key stages: (1) computing the Riemannian gradient in the tangent space, (2) updating the solution using retraction, and (3) adaptively adjusting the step size using the Armijo line search \cite{oviedo2022global}.

\textit{(a) Computation of the Riemannian Gradient:} 
The Riemannian gradient is obtained by orthogonally projecting the Euclidean gradient of the objective function $\phi$ onto the UTS $\mathrm{T}_{\mathbf{\Psi}^i} \mathcal{M}$. Specifically, let $\nabla_{\mathbf{\Psi}^i} \phi$ denote the Euclidean gradient evaluated at the current iterate $\mathbf{\Psi}^i \in \mathcal{M}$. The Riemannian gradient $\operatorname{grad}_{\mathbf{\Psi}^i} \phi$ is derived as,
\begin{equation}
    \operatorname{grad}_{\mathbf{\Psi}^i} \phi = \operatorname{Proj}_{\mathbf{\Psi}^i} ( \nabla_{\mathbf{\Psi}^i} \phi ),\label{Riapro31}
\end{equation}
where $\operatorname{Proj}_{\mathbf{\Psi}^i}(\cdot)$ denotes the orthogonal projection operator onto $\mathrm{T}_{\mathbf{\Psi}^i} \mathcal{M}$. Due to the product structure of the manifold, this projection decomposes into component-wise projections onto the individual tangent spaces. Consequently, the explicit expression is given by,
\begin{equation}
\begin{aligned}
    & \operatorname{Proj}_{\mathbf{\Psi}^i} ( \nabla_{\mathbf{\Psi}^i} \phi ) \\&= \left(
    \begin{array}{l}
    \operatorname{Proj}_{\mathbf{W}^i} (\nabla_{\mathbf{W}^i} \phi), \\
    \operatorname{Proj}_{\mathbf{p}^i_{tx,1}} (\nabla_{\mathbf{p}^i_{tx,1}} \phi), \dots, \operatorname{Proj}_{\mathbf{p}^i_{tx,M_{tx}}} (\nabla_{\mathbf{p}^i_{tx,M_{tx}}} \phi), \\
    \operatorname{Proj}_{\mathbf{p}^i_{rx,1}} (\nabla_{\mathbf{p}^i_{rx,1}} \phi), \dots, \operatorname{Proj}_{\mathbf{p}^i_{rx,M_{rx}}} (\nabla_{\mathbf{p}^i_{rx,M_{rx}}} \phi), \\
    \operatorname{Proj}_{\mathbf{p}^i_{1}} (\nabla_{\mathbf{p}^i_{1}} \phi), \dots, \operatorname{Proj}_{\mathbf{p}^i_{K}}(\nabla_{\mathbf{p}^i_{K}} \phi), \\
    \operatorname{Proj}_{\mathbf{F}^i} (\nabla_{\mathbf{F}^i} \phi), \operatorname{Proj}_{a^i} (\nabla_{a^i} \phi), \operatorname{Proj}_{b^i} (\nabla_{b^i} \phi)
    \end{array}
    \right),
\end{aligned}    
\end{equation}
where $\operatorname{Proj}_{\mathbf{W}^i}(\cdot)$, $\operatorname{Proj}_{\mathbf{F}^i}(\cdot)$, $\operatorname{Proj}_{a^i}(\cdot)$, and $\operatorname{Proj}_{b^i}(\cdot)$ denote the orthogonal projection operators mapping the Euclidean gradients onto the tangent spaces $\mathrm{T}_{\mathbf{W}^i} \mathcal{M}_{\mathbf{W}}$, $\mathrm{T}_{\mathbf{F}^i} \mathcal{M}_{\mathbf{F}}$, $\mathrm{T}_{a^i} \mathcal{M}_{a}$, and $\mathrm{T}_{b^i} \mathcal{M}_{b}$, respectively. Similarly, $\operatorname{Proj}_{\mathbf{p}^i}(\cdot)$ represents the projection onto $\mathrm{T}_{\mathbf{p}^i} \mathcal{M}_{\mathbf{p}}$ for any polarization vector $\mathbf{p} \in \{\mathbf{p}_{tx,m_{tx}}, \mathbf{p}_{rx,m_{rx}}, \mathbf{p}_{k}\}$. These explicit formulations are defined as,
\begin{equation}
\begin{aligned}
& \operatorname{Proj}_{\mathbf{W}^i}\left(\nabla_{\mathbf{W}^i} \phi\right) = \nabla_{\mathbf{W}^i} \phi - \Re\left\{ \operatorname{Tr}\left( \nabla_{\mathbf{W}^i} \phi (\mathbf{W}^i)^H \right) \right\} {\mathbf{W}^i}/{P},  \\
&\operatorname{Proj}_{\mathbf{p}^i}\left(\nabla_{\mathbf{p}^i} \phi\right) = \nabla_{\mathbf{p}^i} \phi - \left( (\mathbf{p}^i)^T \nabla_{\mathbf{p}^i} \phi \right) \mathbf{p}^i,  \\
&\operatorname{Proj}_{\mathbf{F}^i}(\nabla_{\mathbf{F}^i} \phi) = \nabla_{\mathbf{F}^i} \phi,  \\
&\operatorname{Proj}_{a^i}(\nabla_{a^i} \phi) = \nabla_{a^i} \phi,  \\
&\operatorname{Proj}_{b^i}(\nabla_{b^i} \phi) = \nabla_{b^i} \phi,  
\end{aligned} 
\label{Riapro33}
\end{equation}
where $\nabla_{\mathbf{W}^i} \phi$, $\nabla_{\mathbf{p}^i} \phi$, $\nabla_{\mathbf{F}^i} \phi$, $\nabla_{{a}^i} \phi$, and $\nabla_{{b}^i} \phi$ denote the respective Euclidean gradients. Specifically, the projection $\operatorname{Proj}_{\mathbf{W}^i}(\cdot)$ removes the component of the Euclidean gradient parallel to $\mathbf{W}^i$, thereby ensuring that the total power constraint is preserved in the tangent direction. Similarly, $\operatorname{Proj}_{\mathbf{p}^i}(\cdot)$ subtracts the radial component to maintain the constraint of the polarization vectors. Finally, for the variables $\mathbf{F}$, $a$, and $b$, the Riemannian gradients are identical to their Euclidean counterparts, as their respective manifolds are flat Euclidean spaces.

\textbf{\textit{Remark:}} The detailed derivations of the Euclidean gradients $\nabla_{\mathbf{W}} \phi$, $\nabla_{\mathbf{p}} \phi$ (where $\mathbf{p} \in \{\mathbf{p}_{tx,m_{tx}}, \mathbf{p}_{rx,m_{rx}}, \mathbf{p}_{k}\}$), $\nabla_{\mathbf{F}} \phi$, $\nabla_{a} \phi$, and $\nabla_{b} \phi$ are omitted here for brevity and are provided in Appendix \ref{Appendix:A}.

\textit{(b) Update Solution with Retraction:}
Similar to standard gradient descent in Euclidean space, the proposed algorithm uses the negative Riemannian gradient as the direction of steepest descent. However, due to the manifold's curvature, a direct linear update along the tangent vector results in a point that lies in the ambient Euclidean space but generally deviates from the manifold constraints \cite{boumal2023introduction}. To address this, we employ a retraction operation, which maps the updated point from the ambient space back onto the manifold surface while locally preserving the descent direction. Specifically, let $\tau^i $ denote the step size at iteration $i$. The intermediate update solution is given by,
\begin{equation}
    \mathbf{\widehat \Psi}^i = \mathbf{\Psi}^i  -\tau^i \operatorname{grad}_{\mathbf{\Psi}^i} \phi,
    \label{tangent_step}
\end{equation}
where $\mathbf{\widehat \Psi}^i$ represents the point in the ambient space obtained by moving along the tangent direction. To enforce feasibility, the next iterate $\mathbf{\Psi}^{i+1}$ is generated by applying the retraction operator $\operatorname{Ret}_{\mathcal{M}}(\cdot)$, which projects $\mathbf{\widehat \Psi}^i$ back onto the Riemannian manifold $\mathcal{M}$,
\begin{equation}
    \mathbf{\Psi}^{i+1} = \operatorname{Ret}_{\mathcal{M}} ( \mathbf{\widehat \Psi}^i ).
\end{equation}
Due to the product structure of $\mathcal{M}$, the retraction operator acts component-wise on each variable. Thus, the global retraction is expressed as,
\begin{equation}
\begin{aligned}
  &   \operatorname{Ret}_{\mathcal{M}} ( \mathbf{\widehat \Psi}^i ) \\&= \left(
    \begin{array}{l}
    \operatorname{Ret}_{\mathcal{M}_{\mathbf{W}}} (\mathbf{\widehat W}^i), \\
    \operatorname{Ret}_{\mathcal{M}_{\mathbf{p}_{tx,1}}}(\mathbf{\widehat p}_{tx,1}^i), \dots, \operatorname{Ret}_{\mathcal{M}_{\mathbf{p}_{tx,M_{tx}}}}(\mathbf{\widehat p}_{tx,M_{tx}}^i), \\
    \operatorname{Ret}_{\mathcal{M}_{\mathbf{p}_{rx,1}}}(\mathbf{\widehat p}_{rx,1}^i), \dots, \operatorname{Ret}_{\mathcal{M}_{\mathbf{p}_{rx,M_{rx}}}}(\mathbf{\widehat p}_{rx,M_{rx}}^i), \\
    \operatorname{Ret}_{\mathcal{M}_{\mathbf{p}_{1}}}(\mathbf{\widehat p}_1^i), \dots, \operatorname{Ret}_{\mathcal{M}_{\mathbf{p}_{K}}}(\mathbf{\widehat p}_K^i), \\
    \operatorname{Ret}_{\mathcal{M}_{\mathbf{F}}}(\mathbf{\widehat F}^i), \operatorname{Ret}_{\mathcal{M}_a}({\widehat a}^i), \operatorname{Ret}_{\mathcal{M}_b}({\widehat b}^i)
    \end{array}
    \right).
\end{aligned}     
\end{equation}
The specific retraction mappings for the beamforming matrix $\mathbf{W}$, the generic polarization vector $\mathbf{p} \in \{\mathbf{p}_{tx,m_{tx}}, \mathbf{p}_{rx,m_{rx}}, \mathbf{p}_{k}\}$, and the unconstrained variables $\mathbf{F}, a, b$ are defined as follows:
\begin{subequations}
\begin{align}
  &  \operatorname{Ret}_{\mathcal{M}_{\mathbf{W}}}(\mathbf{\widehat W}^i ) = \sqrt{P}{\mathbf{\widehat W}^i }/{\| \mathbf{\widehat W}^i \|_F}, \label{ret_W} \\
 & \operatorname{Ret}_{\mathcal{M}_{\mathbf{p}}}(\mathbf{\widehat p}^i )= {\mathbf{\widehat p}^i }/{\| \mathbf{\widehat p}^i \|_2}, \label{ret_p}\\
  & \operatorname{Ret}_{\mathcal{M}_{\mathbf{F}}}(\mathbf{\widehat F}^i )= \mathbf{\widehat F}^i, \;
    \operatorname{Ret}_{\mathcal{M}_{ a}}({\widehat a}^i )={\widehat a}^i,\;
    \operatorname{Ret}_{\mathcal{M}_{b}}({\widehat b}^i )= {\widehat b}^i. \label{ret_b}
\end{align}
\label{Retrecres}%
\end{subequations}
Geometrically, (\ref{ret_W}) and (\ref{ret_p}) correspond to projecting the point from the tangent plane onto the spherical surface along the radial direction. Conversely, for the unconstrained variables in (\ref{ret_b}), the retraction reduces to an identity mapping, since their underlying manifolds are flat Euclidean spaces.

\textit{(c) Step Size Adaptation:}
To ensure monotonic convergence and computational efficiency, we employ the Armijo backtracking line search strategy \cite{oviedo2022global} to dynamically adjust the step size. This mechanism allows the algorithm to take larger steps when the objective landscape is favorable and smaller steps when precise adjustments are needed. Specifically, at the $i$-th iteration, given an initial step size guess $\tau_{\text{init}}^i$, we find the smallest integer $m_i \ge 0$ such that the step size $\tau^i = \tau_{\text{init}}^i \cdot \beta^{m_i}$ satisfies the sufficient decrease condition,
\begin{equation}
  \phi(\mathbf{\Psi}^{i+1}) \le \phi(\mathbf{\Psi}^i) - c \cdot \tau^i \cdot \|\operatorname{grad}_{\mathbf{\Psi}^i} \phi\|_F^2, \label{Als}
\end{equation}
where $\beta \in (0,1)$ is the contraction factor (typically $0.5$) and $c \in (0,1)$ is a small constant. To further accelerate convergence, the initial step size $\tau_{\text{init}}^{i+1}$ for the subsequent iteration is adaptively updated based on the number of backtracking steps $m_i$ performed in the current iteration. The rationale is as follows,
\begin{itemize}
    \item If $m_i = 0$ (i.e., the condition was satisfied immediately), it implies the previous step was conservative. Thus, we aggressively increase the initial step size for the next iteration ($\tau_{\text{init}}^{i+1} = 2\tau^i$).
    \item If $m_i = 1$ (i.e., one backtrack was needed), the step size is considered appropriate, so we maintain it ($\tau_{\text{init}}^{i+1} = \tau^i$).
    \item If $m_i \ge 2$ (i.e., multiple backtracks were required), the initial guess was too optimistic. Consequently, we reduce the starting step size for the next iteration ($\tau_{\text{init}}^{i+1} = 0.5 \tau^i$).
\end{itemize}
In summary, the update rule for the initial step size of the next iteration is given by,
\begin{equation}
\tau_{\text{init}}^{i+1} = \left\{ \begin{array}{ll}
2 \tau^i, & \text{if } m_i = 0, \\
\tau^i, & \text{if } m_i = 1, \\
0.5 \tau^i, & \text{if } m_i \ge 2.
\end{array} \right. \label{usestepend}
\end{equation}

Based on the preceding analysis, the proposed PRMGD algorithm for solving problem (\ref{reforoverpro}) is summarized in Algorithm \ref{alg:1}. The algorithm terminates when either the norm of the Riemannian gradient falls below a predefined tolerance $\epsilon$ (i.e., $\|\operatorname{grad}_{\mathbf{\Psi}^i} \phi\|_F \le \epsilon^j$) or the maximum number of iterations is reached.

\begin{algorithm}[!t]
    \caption{PRMGD Algorithm to the Problem (\ref{reforoverpro})}
    \label{alg:1}
    \begin{algorithmic}[1]
        \renewcommand{\algorithmicrequire}{\textbf{Input:}}
        \renewcommand{\algorithmicensure}{\textbf{Output:}}
        
        \REQUIRE Current iterate $\mathbf{\Psi}^j$ (comprising $\mathbf{W}^j, \mathbf{P}_{tx}^j, \dots, b^j$);  Parameters $\mu^j, \lambda^j, \varepsilon^j$.
        
        \ENSURE Updated iterate $\mathbf{\Psi}^{j+1}$ for the next outer loop.
        
        \STATE \textbf{Initialization:} Inner loop counter $i = 0$ and $\mathbf{\Psi}^i = \mathbf{\Psi}^j$.
        
        \REPEAT
           \STATE Compute $\nabla \phi(\mathbf{\Psi}^i)$ via Appendix \ref{Appendix:A};
\STATE Compute $\operatorname{grad}_{\mathbf{\Psi}^i} \phi$ via projection (\ref{Riapro31})-(\ref{Riapro33});

\STATE Compute step size $\tau^i$ via (\ref{Als})-(\ref{usestepend});

\STATE Update $\mathbf{\Psi}^{i+1} =\operatorname{Ret}_{\mathcal{M}} ( \mathbf{\widehat \Psi}^i )$ via (\ref{tangent_step})-(\ref{Retrecres});
            \STATE $i \leftarrow i + 1$;
        \UNTIL $\|\operatorname{grad}_{\mathbf{\Psi}^i} \phi\|_F \le \varepsilon^j$ or $i \ge I_{inner}$.
        
    \end{algorithmic}
\end{algorithm}

\subsubsection{Dynamic Adjustment of Hyperparameters}

The performance of the proposed algorithm relies on the adjustment of the smoothing parameter $\mu$, the penalty parameter $\lambda$, the gradient tolerance $\varepsilon$, and the constraint violation threshold $\varsigma$ over outer iterations.

Firstly, $\mu$ and $\lambda$ balance approximation accuracy and feasibility. To refine the objective, $\mu$ decays via $\mu^{j+1}=\max\{\mu_{\min}, \delta_{\mu} \mu^{j}\}$. Conversely, $\lambda$ is conditionally scaled as $\lambda^{j+1} = \lambda^j / \delta_{\lambda}$ (with $\delta_{\lambda} \in (0, 1)$) only if significant constraint violations persist; this strategy enforces strict feasibility while avoiding numerical instability caused by early penalization.

Secondly, $\varepsilon$ determines the termination criterion for the inner PRMGD loop, requiring $\|\operatorname{grad}\phi\|_F \le \varepsilon^j$. As the outer loop advances, higher precision is demanded from the sub-problem. Consequently, we reduce the tolerance via $\varepsilon^{j+1} = \max\{\varepsilon_{\min}, \delta_{\varepsilon} \varepsilon^j\}$, ensuring that computational resources are allocated efficiently across iterations.

Finally, $\varsigma$ limits the maximum constraint violation, ensuring,
\begin{equation}
   \max\left\{ \begin{array}{l} 0,-a^{j+1},-b^{j+1},\\\max_{k} \{ {a^{j+1}}-\text{SINR}_k \}, \\\max_{t}\{{b^{j+1}}-\text{SCNR}_t\} \end{array}\right\} \le \varsigma^j.
\end{equation}
To drive the sequence toward the feasible region, we progressively tighten this bound via $\varsigma^{j+1} = \max\{\varsigma_{\min}, \delta_{\varsigma} \varsigma^j\}$, where $\delta_{\varsigma} \in (0, 1)$ synchronizes violation reduction with the penalty increase and $\varsigma_{\min}$ represents the ultimate feasibility target.

Algorithm \ref{alg:2} outlines the overall EP-PRMGD framework for problem (\ref{reform2euq}). We employ termination thresholds $(\mu_{\min}, \varepsilon_{\min}, \varsigma_{\min})$ as lower bounds for the smoothing parameter, gradient tolerance, and constraint violation, respectively. These thresholds are critical for balancing solution precision with computational efficiency to ensure high-quality convergence.
\begin{algorithm}[!t]
	\caption{EP-PRMGD Algorithm to the Problem (\ref{reform2euq})}
	\label{alg:2}
	\begin{algorithmic}[1]
		\renewcommand{\algorithmicrequire}{\textbf{Input:}}
		\renewcommand{\algorithmicensure}{\textbf{Output:}}
		
		\REQUIRE Initial iterate $\mathbf{\Psi}^0$; 
        Parameters $\mu^0, \lambda^0, \varepsilon^0, \varsigma^0$; 
        Bounds $\mu_{\min}, \varepsilon_{\min}, \varsigma_{\min},o_{tol}$; 
        Decay factors $\delta_{\mu}, \delta_{\lambda}, \delta_{\varepsilon}, \delta_{\varsigma} \in (0, 1)$.
        \ENSURE Optimized solution $\mathbf{\Psi}^\star$.

         \STATE \textbf{Initialization:} Outer loop counter $j=0$.
        \REPEAT
           \STATE Update $\mathbf{\Psi}^{j+1} \leftarrow \text{Algorithm \ref{alg:1} with} \;(\mathbf{\Psi}^j,  \mu^j, \lambda^j,\varepsilon^j)$;
            
            \STATE Update $\mu^{j+1} = \max\{\mu_{\min}, \mu^j \cdot \delta_{\mu}\}$;
            \STATE Update $\varepsilon^{j+1} = \max\{\varepsilon_{\min}, \varepsilon^j \cdot \delta_{\varepsilon}\}$; 
            \STATE Update $\varsigma^{j+1} = \max\{\varsigma_{\min}, \varsigma^j \cdot \delta_{\varsigma}\}$;
            
            \STATE Calculate maximum violation, 
            $$V_{\max} = \max\left\{ \begin{array}{l} 0,-a^{j+1},-b^{j+1},\\\max_{k} \{ {a^{j+1}}-\text{SINR}_k \}, \\\max_{t}\{{b^{j+1}}-\text{SCNR}_t\} \end{array}\right\};$$
            
            \IF{$V_{\max} > \varsigma^{j+1}$}
                \STATE Increase penalty $\lambda^{j+1} = \lambda^j / \delta_{\lambda}$;
            \ELSE
                \STATE Keep penalty $\lambda^{j+1} = \lambda^j$;
            \ENDIF
            
            \STATE $j \leftarrow j+1$;
            
		\UNTIL $\|\mathbf{\Psi}^j - \mathbf{\Psi}^{j-1}\| \le o_{tol}$, $\mu^j = \mu_{\min}$, and $\varsigma^j = \varsigma_{\min}$.
	\end{algorithmic}
\end{algorithm}

\subsection{Analysis of computation complexity and convergence}
\subsubsection{Analysis of Computational Complexity}
The computational complexity of the proposed EP-PRMGD framework is primarily determined by the evaluation of Euclidean gradients within the inner loop of Algorithm \ref{alg:1}. We analyze the per-iteration computational cost associated with updating the optimization variables as follows.

The dominant computational burden arises from the updates of the beamforming matrix ${\bf W}$ and the receive filter matrix ${\bf F}$. Specifically, while computing $\nabla_{\bf W}\phi$ for the communication component in (\ref{grad_W_SINR}) scales with ${\cal O}(K M_{tx} (K+L_r))$, the calculation of the sensing component in (\ref{grad_W_SCNR}) incurs a higher complexity of approximately ${\cal O}((T+C) M_{tx} M_{rx} (K+L_r))$. Additionally, calculating $\nabla_{\bf F}\phi$ appears only in the radar sensing term in (\ref{EcgraF}), contributing a complexity of ${\cal O}((T+C) M_{rx}^2 (K+L_r))$.

Regarding the polarization vectors, the computation of $\nabla_{\bf p}\phi$ is performed element-wise across the antennas. Consequently, the complexity for updating ${\bf P}_{tx}$, ${\bf P}_{rx}$, and $\{{\bf p}_k\}$ in (\ref{Egpu})-(\ref{pkSCNRE}) scales as ${\cal O}((M_{tx} + M_{rx})(T+C)(K+L_r))$. In contrast, the gradients for the auxiliary variables $a$ and $b$ in (\ref{abEUC}) incur a negligible complexity of ${\cal O}(K+T)$.

Furthermore, the Riemannian manifold operations, including the orthogonal projections in (\ref{Riapro33}) and retractions in (\ref{Retrecres}), involve low-cost vector scaling and normalizations with a linear complexity of ${\cal O}(M_{tx}(K+L_r) + M_{rx}T)$. Moreover, the step size determination via the Armijo line search \cite{oviedo2022global} in (\ref{Als}) requires repeated objective function evaluations, which share the same complexity order as the gradient derivation.

Consequently, assuming $M_{tx} \approx M_{rx} \approx M$, and denoting the number of iterations for the outer and inner loops as $I_{out}$ and $I_{in}$, respectively, the overall computational complexity of the proposed framework is approximately ${\cal O}(I_{out} I_{in} (T+C) M^2 (K+L_r))$.

\subsubsection{Analysis of Convergence}
To establish the convergence, we first demonstrate that the gradient in the inner PRMGD loop (Algorithm \ref{alg:1}) vanishes in Theorem \ref{theorem1}. Subsequently, we prove that the overall EP-PRMGD algorithm (Algorithm \ref{alg:2}) converges to a KKT point in Theorem \ref{theorem2}.

\begin{theorem} \label{theorem1} \itshape
Let $\{{\bm \Psi}^i\}$ be the sequence generated by Algorithm \ref{alg:1} with a sufficient descent step size (e.g., via Armijo backtracking). As the iteration count $i \to \infty$, the Riemannian gradient of the penalized objective function vanishes, i.e.,
\begin{equation}
     \lim_{i \to \infty} \|\operatorname{grad}\phi({\bm\Psi}^i) \|_F = 0.
\end{equation}
\end{theorem}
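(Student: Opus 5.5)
We follow the standard Armijo-type convergence argument for Riemannian gradient descent (Absil–Mahony–Sepulchre / Boumal), specialized to the product manifold $\mathcal{M}$ in \eqref{product_mani}, with $\mu^j,\lambda^j$ fixed inside Algorithm~\ref{alg:1}.

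\emph{Step 1 (monotonicity and a lower bound).} By the sufficient decrease condition \eqref{Als}, $\phi({\bm\Psi}^{i+1})\le\phi({\bm\Psi}^i)-c\,\tau^i\|\operatorname{grad}\phi({\bm\Psi}^i)\|_F^2$, so $\{\phi({\bm\Psi}^i)\}$ is nonincreasing. We then show $\phi$ is bounded below on the relevant set. Each LSE term is nonnegative, and $\lambda\mu\log(1+e^{-a/\mu})\ge -\lambda a$ while $\lambda\mu\log(1+e^{(a-\mathrm{SINR}_k)/\mu})\ge\lambda(a-\mathrm{SINR}_k)$. Since $\mathrm{SINR}_k$ is bounded (because $\|{\bf W}\|_F^2=P$ and $\|{\bf p}\|=1$), large $a$ is penalized once $\lambda>1$. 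The SCNR is scale-invariant in ${\bf f}_t$ and bounded by a generalized Rayleigh quotient. Hence, for $\lambda$ exceeding the objective weights, $\phi\ge\phi_{\rm low}>-\infty$. Summing the decrease inequality gives $\sum_i\tau^i\|\operatorname{grad}\phi({\bm\Psi}^i)\|_F^2\le(\phi({\bm\Psi}^0)-\phi_{\rm low})/c<\infty$.

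\emph{Step 2 (bounded iterates / compact sublevel set).} The spherical components are compact. For $a$ and $b$, boundedness of the sublevel set follows from the coercivity shown in Step 1. The filter ${\bf F}$ is the problem: $\phi$ depends on ${\bf f}_t$ only through its direction, and the Euclidean gradient $\nabla_{{\bf f}_t}\phi$ is orthogonal to ${\bf f}_t$. A gradient step therefore only increases $\|{\bf f}_t\|$, and $\|\nabla_{{\bf f}_t}\phi\|$ scales like $1/\|{\bf f}_t\|$. We will argue that $\|{\bf f}_t\|$ stays bounded away from zero from its initialization. This is enough, because on the set where $\|{\bf f}_t\|\ge f_{\min}$ and the other variables are confined to their compact components, the gradient is uniformly Lipschitz. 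Equivalently, one can note that $\phi$ restricted to the sublevel set lies in a region where all derivatives of SINR/SCNR are bounded, the denominators satisfying $\ge\sigma^2>0$ and $\ge\sigma_r^2\|{\bf f}_t\|^2$.

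\emph{Step 3 (uniform step lower bound; main obstacle).} We need $\tau^i\ge\tau_{\min}>0$. Using the retraction bound for the normalization retractions \eqref{Retrecres}, namely $\|\operatorname{Ret}_{{\bm\Psi}}(\bm\xi)-{\bm\Psi}-\bm\xi\|\le C\|\bm\xi\|^2$ on spheres, together with $L$-Lipschitz continuity of $\nabla\phi$ on a neighborhood of the sublevel set, we obtain the Riemannian descent lemma
\begin{equation*}
\phi(\operatorname{Ret}_{{\bm\Psi}}(-\tau\operatorname{grad}\phi))\le\phi({\bm\Psi})-\tau\|\operatorname{grad}\phi\|_F^2+\tfrac{L_R}{2}\tau^2\|\operatorname{grad}\phi\|_F^2 .
\end{equation*}
Hence Armijo accepts any $\tau\le2(1-c)/L_R$. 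Backtracking therefore returns $\tau^i\ge\min\{\tau^i_{\rm init},2\beta(1-c)/L_R\}$. The adaptive rule \eqref{usestepend} never shrinks $\tau^{i+1}_{\rm init}$ below $\tau^i$ by more than a factor of $1/2$ after acceptance. A short induction then gives $\tau^i\ge\tau_{\min}:=\min\{\tau^0_{\rm init},\beta(1-c)/L_R\}>0$. Establishing the constant $L_R$ is where Step 2's boundedness is essential, and this is the step I expect to be delicate.

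\emph{Step 4 (conclusion).} Combining Step 1 and Step 3 gives $\tau_{\min}\sum_i\|\operatorname{grad}\phi({\bm\Psi}^i)\|_F^2<\infty$, so $\|\operatorname{grad}\phi({\bm\Psi}^i)\|_F\to0$, which is the claim of Theorem~\ref{theorem1}.
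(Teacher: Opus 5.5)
Your proof is correct. Its skeleton (Armijo decrease, telescoping, summability of $\|\operatorname{grad}\phi(\bm{\Psi}^i)\|_F^2$) is exactly the paper's argument in Appendix~\ref{appendixB}.

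The difference is that the paper simply \emph{assumes} $\tau^i\ge\tau_{\min}>0$ and $\phi\ge\phi^{\rm low}$, whereas your Steps 1--3 derive them. This includes the induction through the adaptive rule \eqref{usestepend}, and the treatment of the non-compact, scale-invariant $\mathbf{F}$ block via $\mathbf{f}_t^H\nabla_{\mathbf{f}_t}\phi=0$, which requires $\mathbf{f}_t^0\neq\mathbf{0}$.

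This extra work is not cosmetic. Your hypothesis on $\lambda$ corrects the statement rather than weakening it, because $\nabla_a\phi=(\rho-1)+\sum_k\alpha_k-\lambda\sigma(-a/\mu)<\rho-1+K\lambda$. So if $K\lambda<1-\rho$, then $\phi$ is unbounded below in $a$ and $\|\operatorname{grad}\phi\|_F\ge 1-\rho-K\lambda>0$ at every point of $\mathcal{M}$; the same happens for $b$ if $T\lambda<\rho$. The paper's unqualified lower bound therefore fails in that regime. The sharp sufficient condition is $K\lambda>1-\rho$ and $T\lambda>\rho$, of which your $\lambda>1$ is a special case.

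One spot to tighten in Step 3: backtracking trial points, especially after the doubling in \eqref{usestepend}, need not lie in the sublevel set. You can handle this in either of two ways:
\begin{itemize}
\item cap the acceptance threshold by $r/\sup\|\operatorname{grad}\phi\|_F$ for a fixed neighborhood radius $r$; or
\item note that $L_R$ is actually global in $(a,b)$. The logistic weights lie in $(0,1)$ and $\mathcal{S}''\le 1/(4\mu)$, while the $\mathbf{W}$, $\mathbf{p}$, $\mathbf{f}_t$ blocks are controlled by degree-zero homogeneity, because the tangent step never decreases $\|\widehat{\mathbf{W}}\|_F$, $\|\widehat{\mathbf{p}}\|_2$ or $\|\mathbf{f}_t\|_2$.
\end{itemize}
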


\quad {\bf\textit{Proof}}{\bf:} See Appendix \ref{appendixB}. $\hfill\blacksquare$

\begin{theorem} \label{theorem2} \itshape
Let $\{{\bm \Psi}^j\}$ be the sequence of iterates generated by Algorithm \ref{alg:2}. Assume that the smoothing parameter sequence $\{\mu^j\}$ converges to zero as $j \to \infty$ and that the conclusion of Theorem \ref{theorem1} holds. If a feasible limit point ${\bm \Psi}^*$ of the sequence exists, then ${\bm \Psi}^*$ is a KKT point of the original problem (\ref{reform2euq}).
\end{theorem}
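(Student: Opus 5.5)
The argument follows the standard route for exact-penalty / smoothing schemes: write the Riemannian stationarity of each inner solution as a perturbed KKT system of (\ref{reform2euq}), with the smoothed-penalty derivatives acting as multipliers, and then pass to the limit along a subsequence. First, compute the Euclidean partial derivatives of $\phi$ explicitly. With $\sigma(x)=1/(1+e^{-x})$, define
\begin{equation*}
\gamma_k^j=\lambda^j\sigma\big((a^j-\text{SINR}_k)/\mu^j\big),\quad
\eta_t^j=\lambda^j\sigma\big((b^j-\text{SCNR}_t)/\mu^j\big),
\end{equation*}
\begin{equation*}
\nu_a^j=\lambda^j\sigma(-a^j/\mu^j),\quad
\nu_b^j=\lambda^j\sigma(-b^j/\mu^j).
\end{equation*}
All of these are nonnegative and bounded by $\lambda^j$. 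Then
\begin{equation*}
\nabla\phi=\nabla\big((\rho-1)a-\rho b\big)+\sum_k\gamma_k^j\nabla(a-\text{SINR}_k)+\sum_t\eta_t^j\nabla(b-\text{SCNR}_t)-\nu_a^j\nabla a-\nu_b^j\nabla b .
\end{equation*}
This is exactly the Lagrangian gradient of (\ref{reform2euq}) with the inequality multipliers $(\gamma^j,\eta^j,\nu^j)$. Projecting onto $\mathrm{T}_{{\bm\Psi}^j}\mathcal{M}$ through (\ref{Riapro33}) reproduces the equality-constraint multipliers of (\ref{transmitpowcon})–(\ref{conpo3}). These multipliers come out as the radial coefficients $-\Re\{\mathrm{Tr}(\nabla_{\bf W}\phi\,{\bf W}^H)\}/P$ and $-{\bf p}^T\nabla_{\bf p}\phi$. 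By Theorem \ref{theorem1}, together with the stopping rule $\|\operatorname{grad}\phi\|_F\le\varepsilon^j$ where $\varepsilon^j\to\varepsilon_{\min}$ (taken to be $0$ in the idealized limit), the Riemannian gradient of the Lagrangian at ${\bm\Psi}^j$ vanishes asymptotically.

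Second, handle complementary slackness. If a constraint $g(\Psi^*)=a^*-\text{SINR}_k<0$ is strictly satisfied at the feasible limit ${\bm\Psi}^*$, then along the subsequence $g({\bm\Psi}^j)\le -c<0$ eventually. Hence $\gamma_k^j\le\lambda^j e^{-c/\mu^j}$, which tends to $0$ because $\mu^j\to 0$ and $\lambda^j$ is bounded. The same argument applies to the other inequalities. Primal feasibility holds by assumption, and dual feasibility holds because $\sigma>0$.

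Third, pass to the limit. The manifold $\mathcal{M}$ is a product of compact spheres with the Euclidean factors $({\bf F},a,b)$. I would restrict to a subsequence ${\bm\Psi}^j\to{\bm\Psi}^*$ along which the multipliers converge. On that subsequence the objects involved are continuous in ${\bm\Psi}$: $\text{SINR}_k$, $\text{SCNR}_t$, their gradients, and the projection operator. We need ${\bf f}_t^*\neq 0$ so that SCNR is smooth, and the noise terms $\sigma_k^2$ and $\sigma_r^2\|{\bf f}_t\|^2$ keep the denominators positive. Under these conditions the limit of the vanishing Riemannian Lagrangian gradient gives stationarity at ${\bm\Psi}^*$. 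For the equality constraints, the equivalence between Riemannian stationarity on an embedded submanifold and the Euclidean KKT conditions, whose multipliers are given by the radial components, completes the KKT system.

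The main obstacle is boundedness of the inequality multipliers, that is, of $\lambda^j$. Algorithm \ref{alg:2} increases $\lambda^j$ only while $V_{\max}>\varsigma^{j+1}$, and $\varsigma^j\to\varsigma_{\min}$. I would argue that at a feasible limit the violation eventually stays below the threshold, so $\lambda$ stabilizes at a finite $\bar\lambda$, which makes the multipliers bounded. If that argument fails, I would instead normalize the multiplier vector by its norm. A nonzero limit with zero objective-gradient coefficient would then contradict a constraint qualification such as MFCQ at ${\bm\Psi}^*$. MFCQ plausibly holds here because $a$ and $b$ enter the active constraints linearly with coefficient $1$, and decreasing $a$ and $b$ strictly satisfies all the active $\text{SINR}$/$\text{SCNR}$ constraints. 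The exception is when $a^*=0$ or $b^*=0$ is also active, which needs a separate check using a direction that increases the SINR. I would state this constraint qualification as a mild assumption if needed.
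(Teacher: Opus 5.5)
Your outline follows the paper's proof closely. The paper also reads $\lambda^j\sigma(g_m/\mu^j)$ as effective multipliers and argues that $\lambda^j$ freezes at a finite $\lambda^*$ because the limit point is feasible. It then extracts a convergent subsequence of $\gamma^j\in(0,1)^{|\mathcal{C}|}$, kills the inactive multipliers via $g_m/\mu^j\to-\infty$, and passes to the limit in the stationarity relation. The only real difference is that last step. The paper compares $\operatorname{grad}\phi(\bm{\Psi}^{j_k})$ with the residual at $\bm{\Psi}^*$ through parallel transport along minimizing geodesics. You instead use that $\mathcal{M}$ is embedded in a Euclidean space, so $\operatorname{grad}=\operatorname{Proj}_{\bm{\Psi}}\circ\nabla$ with $\operatorname{Proj}_{\bm{\Psi}}$ continuous in $\bm{\Psi}$. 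Your route is simpler and fully adequate here. It also yields explicit multipliers for the power/sphere constraints and makes visible the requirement $\mathbf{f}_t^*\neq\mathbf{0}$, neither of which the paper addresses. You are also right that $\mathcal{M}$ is not compact, contrary to the paper's claim.

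The step you single out as the main obstacle is not closed by your primary argument. $\lambda^j$ is nondecreasing and is tested against $V_{\max}(\bm{\Psi}^{j+1})$ after \emph{every} outer iteration. Knowing $V_{\max}(\bm{\Psi}^{j_k})\to 0$ along the convergent subsequence therefore says nothing about the iterates between $j_k$ and $j_{k+1}$, and any of them can trigger another multiplication by $1/\delta_\lambda$. So feasibility of one limit point does not give $\sup_j\lambda^j<\infty$. Moreover, if $\varsigma_{\min}=0$, even $V_{\max}(\bm{\Psi}^{j_k})\to0$ does not imply $V_{\max}\le\varsigma^{j_k}$. The paper does not prove this either; it bridges the step by explicitly assuming that $\varsigma^j$ decays slowly relative to the constraint violation. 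You need to state an assumption of that kind for your main line to be a proof.

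Your MFCQ fallback is the sounder route, but the exceptional case cannot be checked away. Suppose $a^*=0$ and $\text{SINR}_k^*=0$ for some $k$ (or $b^*=0$ and $\text{SCNR}_t^*=0$). Then the gradient of $\text{SINR}_k$ vanishes there, because its numerator is a squared modulus that is zero. The two active gradients are then $\pm$ the $a$-coordinate direction, and MFCQ fails outright, so that case must be excluded by hypothesis.

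Finally, complementary slackness does not actually need bounded $\lambda$ under the algorithm's update rules. $\lambda^j\le\lambda^0\delta_\lambda^{-j}$ grows at most geometrically, while $e^{-c/\mu^j}$ with $\mu^j=\mu^0\delta_\mu^j$ decays doubly exponentially. Hence only the active multipliers require the constraint qualification.
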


\quad {\bf\textit{Proof}}{\bf:} See Appendix \ref{appendixC}. $\hfill\blacksquare$

Theorem \ref{theorem2} provides the theoretical foundation for the dynamic strategies used in Algorithm \ref{alg:2}. Specifically, a warm-start strategy stabilizes the search trajectory against ill-conditioning as $\mu^j \to 0$, while an adaptive tolerance $\varepsilon \propto \mu$ prevents computationally wasteful oversolving in the early iterations. Moreover, the penalty parameter $\lambda$ is updated conditionally according to the violation thresholds $\varsigma^j$ to ensure bounded multipliers. These mechanisms effectively balance objective minimization and feasibility enforcement, thereby directly satisfying the theoretical conditions for asymptotic convergence.

\section{Numerical Results}
This section evaluates the performance of the proposed fairness-aware beamforming design for PR-assisted ISAC systems. The proposed approach is benchmarked against the following baseline schemes:
\begin{itemize}
\item \textbf{DP-FB ($2\times$):} A fairness-aware dual-polarized design in \cite{shao2025robust}. It uses twice the number of RF chains ($M_{tx,dual} = 2M_{tx}$), serving as a performance upper bound.
\item \textbf{DP-FB ($1\times$):} A variation of the DP-FB ($2\times$) scheme, restricted to the same number of RF chains as our proposed system ($M_{tx,dual} = M_{tx}$).
\item \textbf{WOP-FB:} A conventional ISAC design without considering polarization \cite{dou2024integrated}, adapted to include fairness-aware beamforming.
\item \textbf{FP-FB:} The proposed PR-assisted ISAC architecture, where the polarization vectors $\bf p$ are predefined as fixed values without optimization. 
\item \textbf{AO-FB:} The proposed PR-assisted ISAC architecture, optimized via the AO method in \cite{lee2025integrated}.
\item \textbf{PR-WOFB:} The proposed PR-assisted ISAC architecture, optimized for aggregate sum performance without fairness constraints.
\end{itemize}

Unless otherwise specified, the simulation parameters are configured as follows. We consider an ISAC system equipped with $M=M_{tx} = M_{rx} = 32$ transmit and receive antennas, respectively, both configured as uniform linear arrays (ULAs) with half-wavelength antenna spacing. The system serves $K = 8$ communication users while concurrently transmitting $L_r = 16$ radar waveforms. The numbers of radar targets and clutter sources are set to $T=8$ and $C=8$, respectively, with their directions $\theta_q$ following the uniform distribution $\mathcal{U}(-\frac{\pi}{2},\frac{\pi}{2})$. The total transmit power budget is set to $P = 30$ dBm. The noise powers at the communication users and the sensing receiver are all set to $\sigma^2 = \sigma_k^2 = \sigma_r^2 = 20$ dBm. Finally, the ISAC trade-off parameter is chosen as $\rho=0.5$.

The channel parameters are generated following the established models in \cite{lee2025integrated} and \cite{castellanos2023linear}. Specifically, the number of communication channel paths is set to $L_k=6$ for all users. The cross-polarization leakage factor is set to $\chi_{k,l}=0.1$. The polarization phase shifts $\alpha_{k,l}^{AB}$ for $A,B\in\{H,V\}$ are assumed to be independent and uniformly distributed as $\mathcal{U}(0, 2\pi)$. The complex communication and radar path gains are given by $\beta_{k,l}\sim\mathcal{CN}(0, 1)$ and $\beta_{q}\sim\mathcal{CN}(0, 1)$, respectively.

The optimization hyperparameters are initialized as $\lambda^0=0.08$, $\mu^0=1.5$, $\varepsilon^0=10^{-2}$, and $\varsigma^0=0.1$, with respective decay rates of $\delta_{\lambda}=0.75$, $\delta_{\mu}=0.5$, $\delta_{\varepsilon}=0.6$, and $\delta_{\varsigma}=0.7$. Their corresponding lower bounds are set to $\mu_{\min}=10^{-6}$, $\varepsilon_{\min}=10^{-6}$, and $\varsigma_{\min}=10^{-5}$. The algorithm terminates when the objective variation falls below $o_{tol}=10^{-6}$, and the maximum outer and inner iterations are capped at $I_{outer} = 25$ and $I_{inner} = 150$, respectively. All results are averaged over 5,00 independent realizations.

\begin{figure}[t]
  \begin{center}
  \includegraphics[trim=0 3pt 0 10pt, clip, width=2.8in]{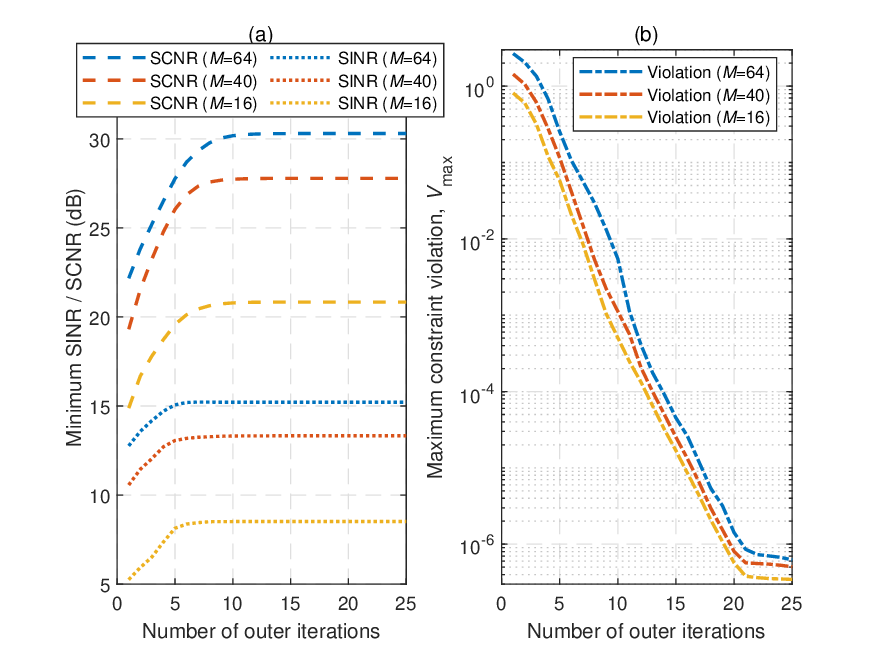}\\
  \caption{Convergence performance of the EP-PRMGD algorithm.}\label{Conv}
  \end{center}
  \vspace{-8pt}
\end{figure}

Fig. \ref{Conv} illustrates the convergence behavior of the proposed EP-PRMGD algorithm under various antenna configurations. Fig. \ref{Conv}(a) shows the worst-case SINR and SCNR versus the number of outer iterations. Both metrics increase monotonically and stabilize within 15 iterations across all antenna scales $M$, demonstrating the rapid convergence and scalability of the algorithm. Furthermore, Fig. \ref{Conv}(b) shows the maximum constraint violation $V_{\max}$ on a logarithmic scale. For all antenna configurations, $V_{\max}$ decreases sharply and falls below the tolerance threshold of $10^{-6}$ within 25 iterations, confirming the effectiveness of the EP framework in strictly enforcing the inequality constraints without numerical instability. Consequently, these empirical results corroborate the theoretical analysis in Theorem \ref{theorem2} and verify the strict convergence of the proposed EP-PRMGD algorithm.

\begin{figure}[t]
  \begin{center}
  \includegraphics[trim=0 3pt 0 15pt, clip, width=2.3in]{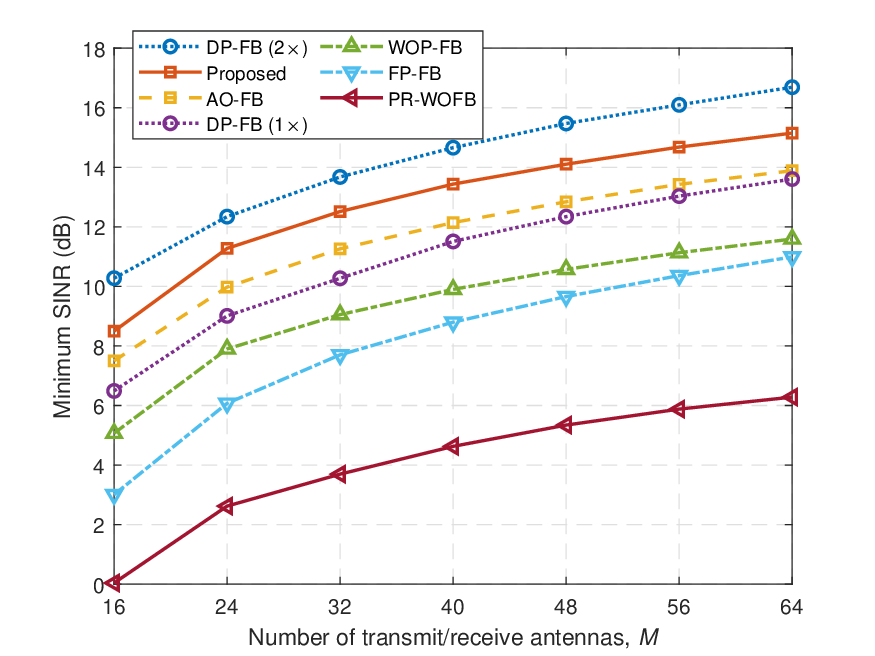}\\
  \caption{Comparison of the minimum SINR among different architectures as the number of BS transmit/receive antennas ($M$) increases from 16 to 64.}\label{DA1}
  \end{center}
    \vspace{-8pt}
\end{figure}

\begin{figure}[t]
  \begin{center}
  \includegraphics[trim=0 3pt 0 5pt, clip, width=2.3in]{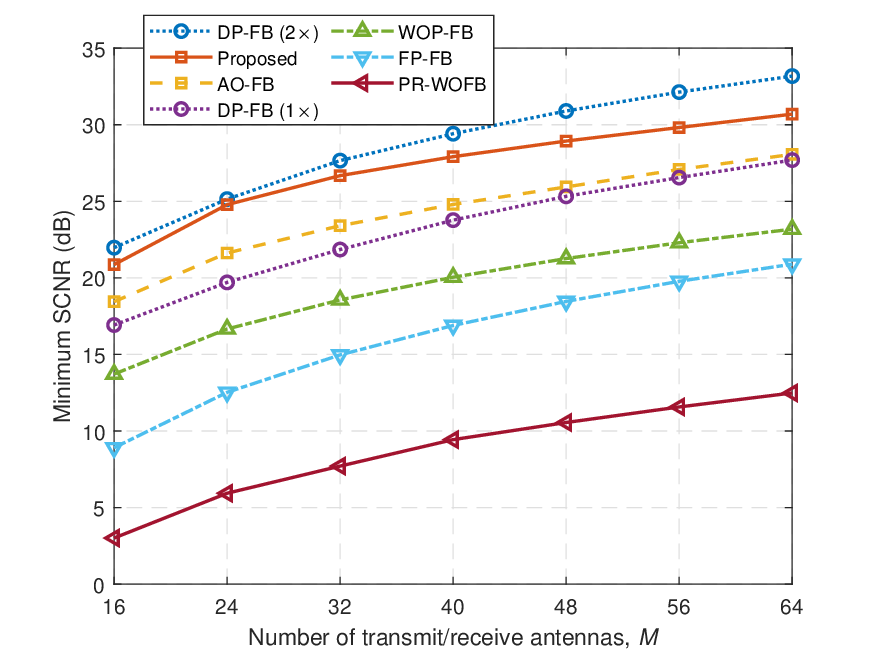}\\
  \caption{Comparison of the minimum SCNR among different architectures as the number of BS transmit/receive antennas ($M$) increases from 16 to 64.}\label{DA2}
  
  \end{center}
    \vspace{-8pt}
\end{figure}

Figs. \ref{DA1} and \ref{DA2} show the worst-case communication and sensing performance versus the number of BS antennas $M$. As expected, the minimum SINR and SCNR of all schemes increase monotonically with $M$. This gain is attributed to the enhanced spatial array gains and DoFs offered by larger arrays, which enable more flexible beam steering to suppress multi-user interference and clutter. Several observations can be made from the architectural comparison. First, the proposed PR-assisted design closely approaches the upper-bound performance of the dual-polarized DP-FB ($2\times$) system. At $M=64$, the performance gap is only 1.6 dB in SINR and 2.5 dB in SCNR, while requiring only half the RF chains. Second, under the same RF-chain cost, the proposed scheme significantly outperforms the DP-FB ($1\times$) and WOP-FB configurations, achieving gains of 1.4 dB and 3.7 dB in SINR and 3.0 dB and 7.5 dB in SCNR, respectively, at $M=64$. This confirms the structural advantage of PR antennas in providing effective DoFs for fairness-aware beamforming and improving RF-chain utilization. Third, compared with the FP-FB scheme, the proposed joint design achieves gains of more than 4.2 dB in SINR and 9.5 dB in SCNR at $M=64$, showing that the polarization DoFs offered by PR antennas can be fully exploited only through proper design. Fourth, from an algorithmic perspective, the proposed EP-PRMGD maintains a 1.4–2.5 dB advantage over AO-FB across all antenna scales, demonstrating the benefit of direct optimization over the manifold space. Finally, PR-WOFB yields the worst performance, with the minimum SINR dropping to nearly 0 dB at $M=16$, further highlighting the necessity of an explicit max-min fairness formulation to guarantee service fairness.

\begin{figure}[t]
  \begin{center}
  \includegraphics[trim=0 3pt 0 18pt, clip, width=2.3in]{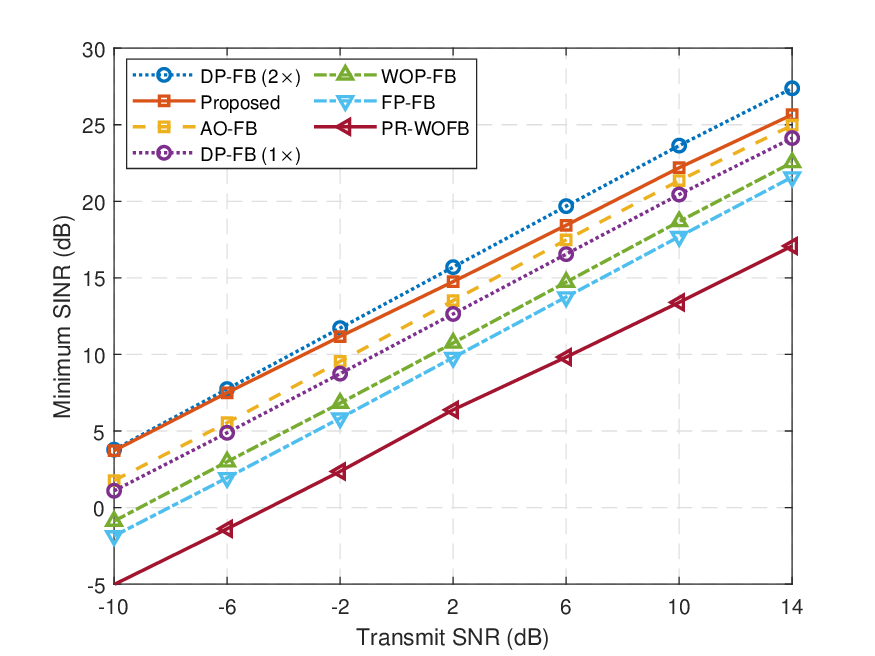}\\
  \caption{Comparison of the minimum SINR among different architectures as the transmit SNR increases from -10 to 14 dB.}\label{SNR1}
  \end{center}
    \vspace{-8pt}
\end{figure}

\begin{figure}[t]
  \begin{center}
  \includegraphics[trim=0 3pt 0 18pt, clip, width=2.3in]{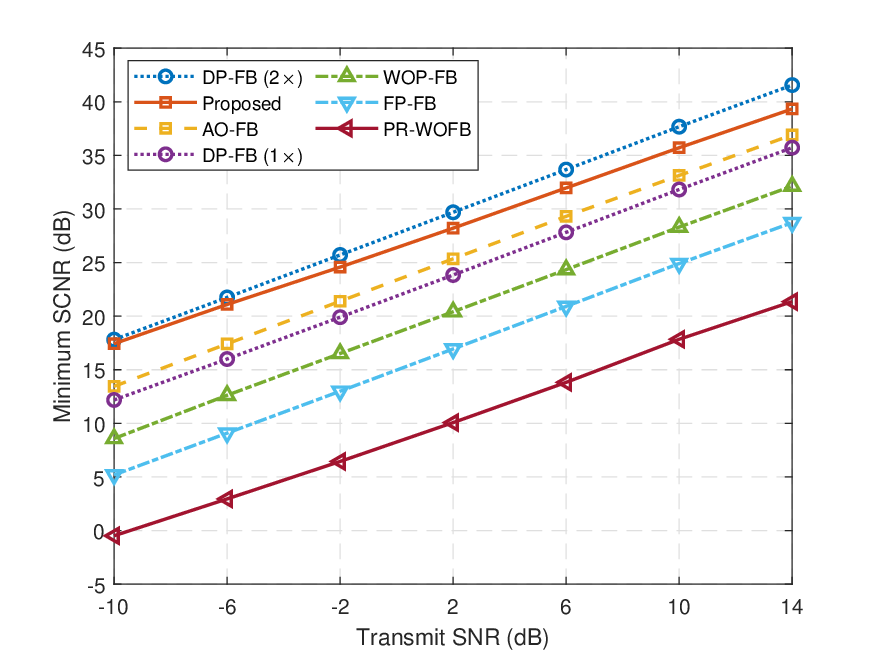}\\
  \caption{Comparison of the minimum SCNR among different architectures as the transmit SNR increases from -10 to 14 dB.}\label{SNR2}
  \end{center}
    \vspace{-8pt}
\end{figure}

Figs. \ref{SNR1} and \ref{SNR2} evaluate the worst-case performance versus the transmit SNR ($P/\sigma^2$). While all schemes exhibit the expected monotonic increase due to enhanced signal strength, the performance hierarchy remains consistent with the spatial-scaling results in Figs. \ref{DA1} and \ref{DA2}. This consistency underscores the robustness of the PR-assisted architecture across both low- and high-SNR regimes. Specifically, at a transmit SNR of 14 dB, the proposed design closely approaches the performance of the DP-FB ($2\times$) system, with marginal gaps of 1.7 dB in SINR and 2.2 dB in SCNR. Furthermore, it achieves substantial gains over the DP-FB ($1\times$) and WOP-FB configurations, reaching up to 3.1 dB in SINR and 7.2 dB in SCNR. A notable observation is the contrast with the static FP-FB scheme; the proposed joint optimization maintains a significant advantage (e.g., 11.0 dB in SCNR at 14 dB), demonstrating that dynamically optimizing polarization DoFs remains essential even with abundant transmit power. From an algorithmic perspective, EP-PRMGD sustains a 1.0–2.0 dB advantage over the AO-FB method, verifying the efficacy of the proposed algorithm across varying power budgets. Finally, the severe performance degradation of the PR-WOFB scheme, marked by an 18.0 dB SCNR loss at 14 dB, confirms that merely increasing transmit power cannot resolve spatial resource imbalances, thereby necessitating an explicit fairness formulation.

\begin{figure}[t]
  \begin{center}
  \includegraphics[trim=0 3pt 0 18pt, clip, width=2.3in]{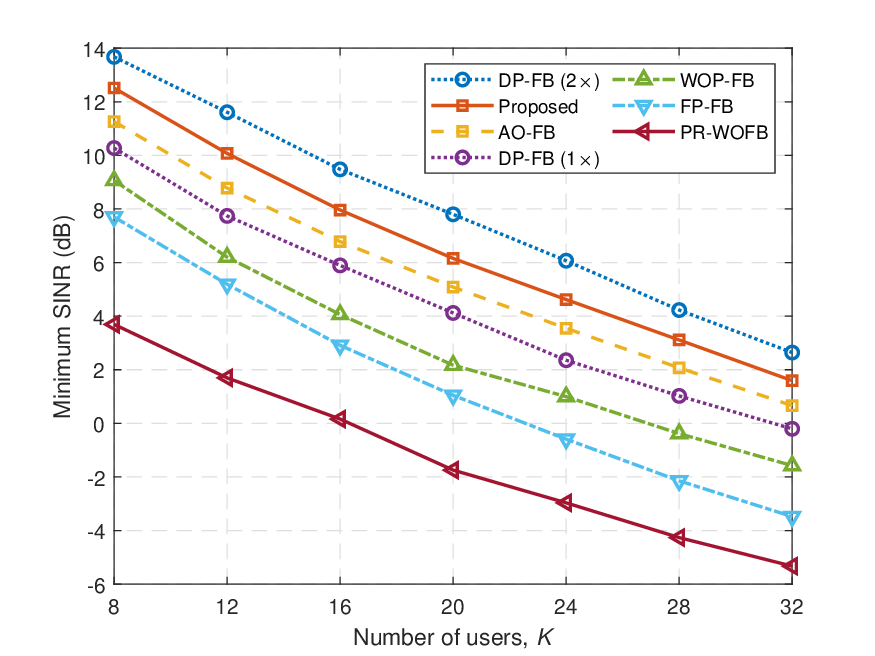}\\
  \caption{Comparison of the minimum SINR among different architectures as the number of users ($K$) increases from 8 to 32.}\label{User}
  \end{center}
  \vspace{-8pt}
\end{figure}

\begin{figure}[t]
  \begin{center}
  \includegraphics[trim=0 3pt 0 18pt, clip,width=2.3in]{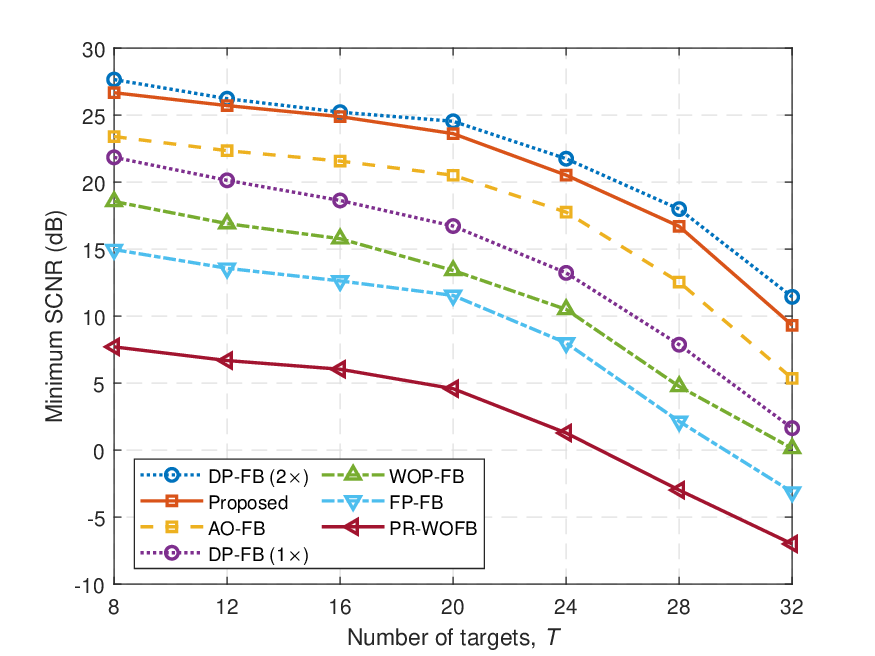}\\
  \caption{Comparison of the minimum SCNR among different architectures as the number of targets ($T$) increases from 8 to 32.}\label{Tar}
  \end{center}
  \vspace{-8pt}
\end{figure}

Figs. \ref{User} and \ref{Tar} further evaluate the worst-case performance from a system-load perspective, with results plotted versus the number of communication users $K$ and sensing targets $T$, respectively. As the system load increases, the minimum SINR and SCNR of all schemes degrade monotonically because the limited transmit power and spatial DoFs must be shared among more nodes, thereby intensifying resource competition. At $K=32$ and $T=32$, the proposed design still closely tracks the upper-bound performance of the dual-polarized DP-FB ($2\times$) system, with marginal gaps of only 1.0 dB in SINR and 1.5 dB in SCNR. It also significantly outperforms the DP-FB ($1\times$), WOP-FB, and FP-FB configurations, achieving gains of approximately 1.7, 3.1, and 5.0 dB in SINR and 7.7, 9.2, and 12.4 dB in SCNR, respectively. This demonstrates the effectiveness of dynamically exploiting polarization DoFs for alleviating resource bottlenecks in congested networks. Moreover, the proposed EP-PRMGD retains a consistent advantage over the AO-FB method, confirming its robustness under heavy loads. 

\begin{figure}[t]
  \begin{center}
  \includegraphics[trim=0 3pt 0 18pt, clip, width=2.3in]{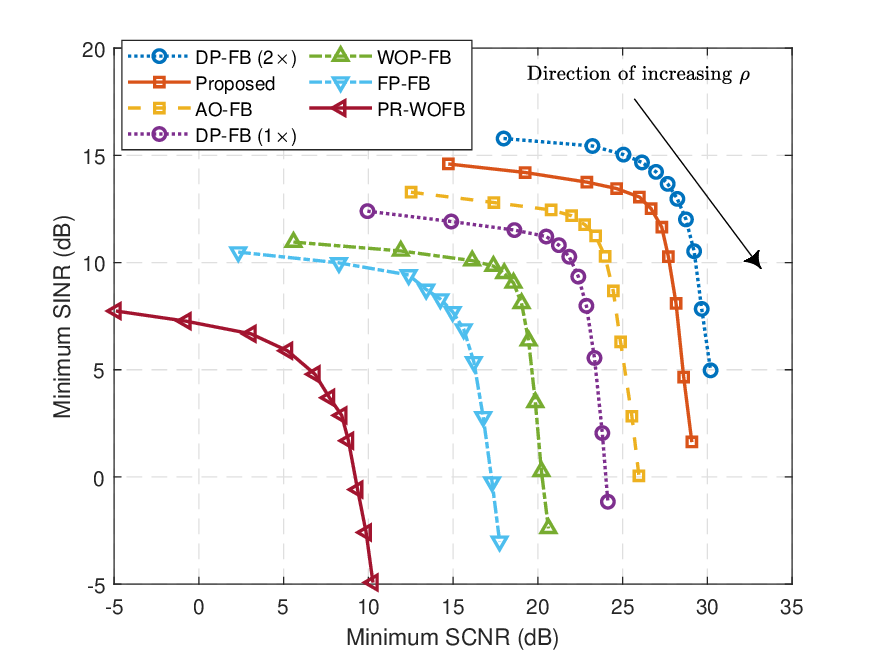}\\
  \caption{Comparison of the minimum SINR and minimum SCNR trade-off among different architectures.}\label{tradeoff}
  \end{center}
  \vspace{-8pt}
\end{figure}

Finally, Fig. \ref{tradeoff} investigates the fundamental trade-off between communication and sensing by varying the trade-off parameter $\rho$ uniformly from 0 to 1 with a step size of 0.1. As expected, all schemes exhibit a clear Pareto-like boundary. Specifically, prioritizing one functionality inevitably degrades the other because of intense competition for shared resources. Notably, the discrete operating points are densely clustered around the curved portion of the boundary. This non-uniform distribution arises because the linear variation of $\rho$ maps nonlinearly to the highly coupled max-min objective space. Physically, this region corresponds to the most resource-efficient operating regime. Outside this region, a marginal improvement in one metric requires a disproportionately large sacrifice in the other. Therefore, the trade-off factor and system configuration parameters should be carefully calibrated to strike an effective balance in practical fairness-aware ISAC deployments with PR antennas.

\section{Conclusion}
This paper investigated a fairness-aware beamforming design for polarimetric ISAC systems equipped with cost-effective PR antennas. A joint optimization framework was developed to maximize the minimum communication SINR and sensing SCNR by designing the transmit beamformer, PR control coefficients, and receive filters. To tackle the highly coupled and nonsmooth problem, an EP-PRMGD algorithm was proposed, enabling direct optimization over a curved space without artificial block-wise decoupling. The proposed algorithm was theoretically proven to converge to a KKT point. Extensive simulations validated that the PR-assisted scheme closely approaches the performance of fully dual-polarized architectures while requiring only half the RF chains, demonstrating its superiority in balancing hardware efficiency and service equity.

\appendices
\section{Computation of the Euclidean gradients}
\label{Appendix:A}
To obtain the Euclidean gradients of $\phi({\bm\Psi})$ in (\ref{reforoverpro}), we distinguish between the design variables $\mathbf{x} \in \{\mathbf{W}, \mathbf{F}, \mathbf{p}\}$ and the auxiliary variables $\{a, b\}$. By denoting $\mathcal{S}(z) \triangleq \mu \log(1 + e^{z/\mu})$, the objective function is given as,
\begin{equation}
\begin{aligned}
\phi({\bm \Psi}) &= (\rho - 1)a - \rho b + \lambda \textstyle\sum_{k=1}^{K} \mathcal{S}(a - \text{SINR}_k) \\
&+ \lambda \textstyle\sum_{t=1}^{T} \mathcal{S}(b - \text{SCNR}_t) +  \lambda \mathcal{S}(-a)+\lambda \mathcal{S}(-b).
\end{aligned} 
\label{phi_struct}
\end{equation}
For the auxiliary variables $a$ and $b$, the gradients are derived directly from the partial derivatives in (\ref{phi_struct}). 
For the system design variables $\mathbf{x}$, which influence the objective indirectly via the SINR and SCNR metrics, we apply the chain rule, 
\begin{equation}
\nabla_{\mathbf{x}} \phi = - \textstyle\sum_{k=1}^K \alpha_k \nabla_{\mathbf{x}} \text{SINR}_k - \textstyle\sum_{t=1}^T \beta_t \nabla_{\mathbf{x}} \text{SCNR}_t,
\label{chain_rule_gen}
\end{equation}
where the weighting coefficients $\alpha_k$ and $\beta_t$ are given by the derivative of the smoothing function $\sigma(z/\mu) \triangleq (1+e^{-z/\mu})^{-1}$, yielding,
\begin{equation}
\alpha_k = \lambda \cdot \sigma(\frac{a - \text{SINR}_k}{\mu}), \quad \beta_t = \lambda \cdot \sigma(\frac{b - \text{SCNR}_t}{\mu}).
\end{equation}
In the following, we provide the detailed derivations.

\textit{1) Euclidean Gradients of the Beamforming Matrix} $\mathbf{W}$:
For $\mathbf{W} \in \mathbb{C}^{M_{tx} \times (K+L_r)}$, we derive its gradient by analyzing the SINR and SCNR terms separately. For the user $k$, we define $\mathbf{h}_{eff,k}^H = \mathbf{p}_k^T \mathbf{H}_k \mathbf{P}_{tx} \in \mathbb{C}^{1 \times M_{tx}}$. The $\text{SINR}_k$ is defined as the ratio $\mathcal{N}_k / \mathcal{D}_k$, where $\mathcal{N}_k = |\mathbf{h}_{eff,k}^H \mathbf{w}_k|^2$ and $\mathcal{D}_k = \sum_{j \neq k} |\mathbf{h}_{eff,k}^H \mathbf{w}_j|^2 + \sigma_k^2$. To compute the gradient with respect to the $j$-th column $\mathbf{w}_j$, we use the quotient rule, 
\begin{equation}
    \nabla(f/g) = g^{-2}(g \nabla f - f \nabla g) = g^{-1}(\nabla f - (f/g) \nabla g),\label{ quotientrule}
\end{equation}
considering the following two cases,
\begin{itemize}
    \item \textbf{Case $j=k$:} The variable $\mathbf{w}_k$ appears only in $\mathcal{N}_k$. Thus, $\nabla_{\mathbf{w}_k} \mathcal{N}_k = 2 \mathbf{h}_{eff,k} (\mathbf{h}_{eff,k}^H \mathbf{w}_k)$ and $\nabla_{\mathbf{w}_k} \mathcal{D}_k = \mathbf{0}$.
    \item \textbf{Case $j \neq k$:} The variable $\mathbf{w}_j$ appears only in $\mathcal{D}_k$. Thus, $\nabla_{\mathbf{w}_j} \mathcal{N}_k = \mathbf{0}$ and $\nabla_{\mathbf{w}_j} \mathcal{D}_k = 2 \mathbf{h}_{eff,k} (\mathbf{h}_{eff,k}^H \mathbf{w}_j)$.
\end{itemize}
Applying the quotient rule in (\ref{ quotientrule}), we obtain the column-wise gradients,
\begin{equation}
\nabla_{\mathbf{w}_j} \text{SINR}_k = 
\begin{cases} 
\frac{2}{\mathcal{D}_k} \mathbf{h}_{eff,k} \mathbf{h}_{eff,k}^H \mathbf{w}_k, & j = k, \\
 -\frac{2 \cdot \text{SINR}_k}{\mathcal{D}_k} \mathbf{h}_{eff,k} \mathbf{h}_{eff,k}^H \mathbf{w}_j, & j \neq k.
\end{cases}
\label{grad_W_SINR}
\end{equation}

For sensing, let $\mathbf{u}_{t,q}^H = \mathbf{f}_t^H \mathbf{P}_{rx}^T \mathbf{G}_q \mathbf{P}_{tx} \in \mathbb{C}^{1 \times M_{tx}}$ denote the effective response vector of the object $q$. The SCNR is expressed as $\mathcal{N}_t^r / \mathcal{D}_t^r$, where $\mathcal{N}_t^r = \|\mathbf{u}_{t,t}^H \mathbf{W}\|^2$ and $\mathcal{D}_t^r = \sum_{q \neq t} \|\mathbf{u}_{t,q}^H \mathbf{W}\|^2 + \sigma_r^2 \|\mathbf{f}_t\|^2$.
Consequently, the gradients of the numerator and denominator are,
\begin{equation}
    \nabla_{\mathbf{W}} \mathcal{N}_t^r = 2 \mathbf{u}_{t,t} \mathbf{u}_{t,t}^H \mathbf{W}, \quad
\nabla_{\mathbf{W}} \mathcal{D}_t^r = \textstyle\sum_{q \neq t} 2 \mathbf{u}_{t,q} \mathbf{u}_{t,q}^H \mathbf{W}.
\end{equation}
Applying the quotient rule in (\ref{ quotientrule}), we arrive at,
\begin{equation}
\nabla_{\mathbf{W}} \text{SCNR}_t = \frac{2}{\mathcal{D}_t^r} ( \mathbf{u}_{t,t}\mathbf{u}_{t,t}^H \mathbf{W} - \text{SCNR}_t \sum_{q \neq t} \mathbf{u}_{t,q}\mathbf{u}_{t,q}^H \mathbf{W} ).
\label{grad_W_SCNR}
\end{equation}
Finally, substituting (\ref{grad_W_SINR}) and (\ref{grad_W_SCNR}) into the chain rule expression (\ref{chain_rule_gen}) yields the complete Euclidean gradient $\nabla_{\mathbf{W}} \phi$.

\textit{2) Euclidean Gradients of the Receive Filter} $\mathbf{F}$: 
The receive filter matrix $\mathbf{F} \in \mathbb{C}^{M_{rx} \times T}$ is decoupled across targets, meaning the $t$-th column vector $\mathbf{f}_t \in \mathbb{C}^{M_{rx} \times 1}$ only affects $\text{SCNR}_t$. We define $\mathbf{A}_t = \mathbf{S}_t \mathbf{S}_t^H \in \mathbb{C}^{M_{rx} \times M_{rx}}$ and $\mathbf{B}_t = \sum_{q \neq t} \mathbf{S}_q \mathbf{S}_q^H + \sigma_r^2 \mathbf{I}_{M_{rx}} \in \mathbb{C}^{M_{rx} \times M_{rx}}$, where $\mathbf{S}_q = \mathbf{P}_{rx}^T \mathbf{G}_q \mathbf{P}_{tx} \mathbf{W} \in \mathbb{C}^{M_{rx} \times (K+L_r)}$. The $\text{SCNR}_t$ is given as,
\begin{equation}
    \frac{\mathbf{f}_t^H \mathbf{A}_t \mathbf{f}_t}{\mathbf{f}_t^H \mathbf{B}_t \mathbf{f}_t},
\end{equation}
by applying the rule in (\ref{ quotientrule}), its gradient is given by,
\begin{equation}
\nabla_{\mathbf{f}_t} \text{SCNR}_t = \frac{2}{\mathbf{f}_t^H \mathbf{B}_t \mathbf{f}_t} \left( \mathbf{A}_t \mathbf{f}_t - \text{SCNR}_t \mathbf{B}_t \mathbf{f}_t \right).\label{EcgraF}
\end{equation}
Consequently, using (\ref{chain_rule_gen}), we obtain $\nabla_{\mathbf{f}_t} \phi = -\beta_t \nabla_{\mathbf{f}_t} \text{SCNR}_t$.

\textit{3) Euclidean Gradients of the Polarization Vectors} $\mathbf{p}$: 
Since the polarization vectors $\mathbf{p} \in \mathbb{R}^{2}$ are real-valued, we use the chain rule identity: given a scalar function $f(\mathbf{p}) = |g(\mathbf{p})|^2$, its gradient is $\nabla_{\mathbf{p}} f = 2 \Re \{ g(\mathbf{p})^* \nabla_{\mathbf{p}} g(\mathbf{p}) \}$.

\textbf{User Polarization $\mathbf{p}_k$}: 
The polarization vector $\mathbf{p}_k \in \mathbb{R}^2$ influences only the $\text{SINR}_k$ of user $k$. By defining $\mathbf{s}_{k,j} = \mathbf{H}_k \mathbf{P}_{tx} \mathbf{w}_j \in \mathbb{C}^{2 }$ and $y_{k,j} = \mathbf{p}_k^T \mathbf{s}_{k,j}$, the numerator and denominator of $\text{SINR}_k$ are rewritten as $\mathcal{N}_k=|y_{k,k}|^2$ and $\mathcal{D}_k = \sum_{j \neq k} |y_{k,j}|^2 + \sigma_k^2$, respectively. By applying a logic similar to that in (\ref{grad_W_SINR}), we arrive at,
\begin{equation}
\begin{aligned}
&\nabla_{\mathbf{p}_k} \text{SINR}_k = \frac{1}{\mathcal{D}_k} \nabla_{\mathbf{p}_k} \mathcal{N}_k - \frac{\text{SINR}_k}{\mathcal{D}_k} \nabla_{\mathbf{p}_k} \mathcal{D}_k \\
&= \frac{2}{\mathcal{D}_k} \Re\left( y_{k,k}^* \mathbf{s}_{k,k} \right) - \frac{2 \cdot \text{SINR}_k}{\mathcal{D}_k} \textstyle\sum_{j \neq k} \Re\left( y_{k,j}^* \mathbf{s}_{k,j} \right).\label{Egpu}
\end{aligned}
\end{equation}

\textbf{Transmit Polarization $\mathbf{p}_{tx, m_{tx}}$}: 
The vector $\mathbf{p}_{tx, m_{tx}} \in \mathbb{R}^2$ with $m_{tx} \in [1, M_{tx}]$ influences both communication and sensing metrics. For communication, we define $\bm{\eta}_{k,m_{tx}} = (\mathbf{p}_k^T \mathbf{H}_{k,m_{tx}})^T \in \mathbb{C}^{2 }$, where $\mathbf{H}_{k,m_{tx}} \in \mathbb{C}^{2 \times 2}$ denotes the sub-matrix of $\mathbf{H}_k$ corresponding to the $m_{tx}$-th transmit antenna. The partial derivative of $y_{k,j}$ is given by $\frac{\partial y_{k,j}}{\partial \mathbf{p}_{tx,m_{tx}}} = \bm{\eta}_{k,m_{tx}} w_{j,m_{tx}}$. Similarly to (\ref{Egpu}), the gradient is expressed as,
\begin{equation}
\begin{aligned}
\nabla_{\mathbf{p}_{tx,m_{tx}}} &\text{SINR}_k = \frac{2}{\mathcal{D}_k} \Re\left( y_{k,k}^* \bm{\eta}_{k,m_{tx}} w_{k,m_{tx}} \right) \\
&- \frac{2 \cdot \text{SINR}_k}{\mathcal{D}_k} \textstyle\sum_{j \neq k} \Re\left( y_{k,j}^* \bm{\eta}_{k,m_{tx}} w_{j,m_{tx}} \right).
\end{aligned}
\end{equation}
For sensing, let $\mathbf{G}_{q,m_{tx}} \in \mathbb{C}^{2M_{tx} \times 2}$ be the $m_{tx}$-th block column of $\mathbf{G}_q$. We define $\mathbf{d}_{t,q,m_{tx}} = (\mathbf{f}_t^H \mathbf{P}_{tx}^T \mathbf{G}_{q,m_{tx}})^T \in \mathbb{C}^{2 }$ and $z_{t,q,j} = \mathbf{u}_{t,q}^H \mathbf{w}_j$. The numerator and denominator of $\text{SCNR}_t$ are $\mathcal{N}_t^r = \sum_{j} |z_{t,t,j}|^2$ and $\mathcal{D}_t^r = \sum_{q \neq t} \sum_{j} |z_{t,q,j}|^2 + \sigma_r^2 \|\mathbf{f}_t\|^2$, respectively. The gradient is derived as,
\begin{equation}
\begin{aligned}
\nabla_{\mathbf{p}_{tx,m_{tx}}} &\text{SCNR}_t = \frac{2}{\mathcal{D}_t^r} \Re(\textstyle \sum_{j} z_{t,t,j}^* \mathbf{d}_{t,t,m} w_{j,m_{tx}}) \\
&- \frac{2 \cdot \text{SCNR}_t}{\mathcal{D}_t^r} \textstyle\sum_{q \neq t} \Re( \textstyle\sum_{j} z_{t,q,j}^* \mathbf{d}_{t,q,m} w_{j,m_{tx}} ),
\end{aligned}
\end{equation}

\textbf{Receive Polarization $\mathbf{p}_{rx, m_{rx}}$}: 
The vector $\mathbf{p}_{rx, m_{rx}} \in \mathbb{R}^2$ with $m_{rx} \in [1, M_{rx}]$ influences only the sensing metric $\text{SCNR}_t$. We define $\mathbf{V}_{q} = \mathbf{G}_q \mathbf{P}_{tx} \mathbf{W} \in \mathbb{C}^{2M_{rx}\times (K+L_r)}$ and let $\mathbf{v}_{q,m_{rx},j} \in \mathbb{C}^{2 \times 1}$ denote the $j$-th column of the $m_{rx}$-th block row of $\mathbf{V}_q$. Additionally, let $f_{t,m_{rx}}$ denote the $m_{rx}$-th element of the receive filter vector $\mathbf{f}_t$. The partial derivative of $z_{t,q,j} = \mathbf{u}_{t,q}^H \mathbf{w}_j$ with respect to $\mathbf{p}_{rx,m_{rx}}$ is given by $\frac{\partial z_{t,q,j}}{\partial \mathbf{p}_{rx,m_{rx}}} = f_{t,m_{rx}}^* \mathbf{v}_{q,m_{rx},j}$. By applying a similar logic to the previous cases, the gradient is derived as,
\begin{equation}
\begin{aligned}
 &\nabla_{\mathbf{p}_{rx,m_{rx}}}\text{SCNR}_t = \frac{2}{\mathcal{D}_t^r} \Re( \textstyle\sum_{j} z_{t,t,j}^* f_{t,m_{rx}}^* \mathbf{v}_{t,m_{rx},j} ) \\
&- \frac{2 \cdot \text{SCNR}_t}{\mathcal{D}_t^r} \textstyle\sum_{q \neq t} \Re( \textstyle\sum_{j} z_{t,q,j}^* f_{t,m_{rx}}^* \mathbf{v}_{q,m_{rx},j} ).\label{pkSCNRE}
\end{aligned}
\end{equation}

\textit{4) Euclidean Gradients of the Auxiliary Variables} $a, b$: 
The gradients for $a$ and $b$ are directly obtained from (\ref{phi_struct}) as,
\begin{subequations}
\begin{align}
\nabla_{a} \phi &= (\rho - 1) + \textstyle\sum_{k=1}^K \alpha_k - \lambda \sigma(-a/\mu), \\
\nabla_{b} \phi &= -\rho + \textstyle\sum_{t=1}^T \beta_t - \lambda \sigma(-b/\mu).
\end{align}
\label{abEUC}%
\end{subequations}
This completes the derivation of all Euclidean gradients.  $\hfill\blacksquare$

\section{Proof of Theorem \ref{theorem1}}
\label{appendixB}
The proof relies on a standard telescoping sum argument utilizing the sufficient decrease condition provided by the Armijo line search in (\ref{Als}) . First, assuming the step sizes produced by the line search are bounded away from zero, i.e., $\tau^i \ge \tau_{\min} > 0$, the Armijo condition in (\ref{Als}) implies,
\begin{equation}
  \phi(\mathbf{\Psi}^i)-\phi(\mathbf{\Psi}^{i+1}) \ge   c \cdot \tau_{\min} \cdot \|\operatorname{grad}\phi(\mathbf{\Psi}^i)\|_F^2.
\end{equation}
Summing this inequality over the iterations from $i=0$ to $I_{\max}-1$, we obtain,
\begin{equation}
    \begin{aligned}
        \phi(\mathbf{\Psi}^0) - \phi(\mathbf{\Psi}^{I_{\max}}) &= \sum\limits_{i=0}^{I_{\max}-1} (\phi(\mathbf{\Psi}^i) - \phi(\mathbf{\Psi}^{i+1})) \\
        &\ge c \cdot \tau_{\min} \sum\limits_{i=0}^{I_{\max}-1} \|\operatorname{grad}\phi(\mathbf{\Psi}^{i}) \|_F^2.
    \end{aligned}
\end{equation}
Since the objective function $\phi$ is bounded below by some value $\phi^{\text{low}}$, we have $\phi(\mathbf{\Psi}^0) - \phi(\mathbf{\Psi}^{I_{\max}}) \le \phi(\mathbf{\Psi}^0) - \phi^{\text{low}}$. Taking the limit as $I_{\max} \to \infty$, the series on the right-hand side converges,
\begin{equation}
    \sum\limits_{i=0}^{\infty} \|\operatorname{grad}\phi(\mathbf{\Psi}^{i}) \|_F^2 \le \frac{\phi(\mathbf{\Psi}^0) - \phi^{\text{low}}}{c \cdot \tau_{\min}} < \infty.
\end{equation}
The convergence of the series implies that the summands must vanish asymptotically, i.e.,
\begin{equation}
    \lim_{i \to \infty} \|\operatorname{grad}\phi(\mathbf{\Psi}^{i}) \|_F = 0.
\end{equation}
This completes the proof. \hfill $\blacksquare$

\section{Proof of Theorem \ref{theorem2}}
\label{appendixC}
In the $j$-th iteration of Algorithm \ref{alg:2}, the penalized objective (\ref{reform2euq}) is defined as,
\begin{equation}
     \phi(\bm{\Psi}) = f_0(\bm{\Psi}) + \lambda^j \sum_{m \in \mathcal{C}} \mu^j \log( 1 + \exp(\frac{g_m(\bm{\Psi})}{\mu^j}) ),
\end{equation}
where $f_0(\bm{\Psi})$ is the primary objective, $g_m(\bm{\Psi}) \le 0$ denotes the $m$-th constraint, and $\mathcal{C}$ is the constraint index set.

Let $\{\bm{\Psi}^j\}$ denote the iterate sequence generated by Algorithm~\ref{alg:2}. The compactness of the search space $\mathcal{M}$ implies $\{\bm{\Psi}^j\}$ is bounded. By the Bolzano--Weierstrass theorem \cite{oman2017short}, there exists a subsequence $\{\bm{\Psi}^{j_k}\}$ converging to a limit point $\bm{\Psi}^*$. Assuming feasibility of $\bm{\Psi}^*$ as per the theorem's premise, we verify the KKT conditions term by term.

\subsection*{1. Multiplier Convergence}
First, we establish penalty parameter stabilization and multiplier convergence. According to Algorithm \ref{alg:2}, $\lambda$ updates as,
\begin{equation}
    \lambda^{j+1} = 
    \begin{cases} 
    \lambda^j/ \delta_{\lambda}, & \text{if } V_{\max}(\bm{\Psi}^{j+1}) > \varsigma^{j+1}, \\
    \lambda^j, & \text{otherwise},
    \end{cases}
\end{equation}
where $\delta_{\lambda}\in(0, 1)$.
Since $\bm{\Psi}^*$ is feasible, the constraint violation along the convergent subsequence vanishes,
\begin{equation}
    \lim_{k \to \infty} V_{\max}(\bm{\Psi}^{j_k}) = 0.
\end{equation}
Assuming the tolerance sequence $\{\varsigma^j\}$ decays sufficiently slowly relative to the constraint violation rate, there exists a finite index $J_0$ such that,
\begin{equation}
    V_{\max}(\bm{\Psi}^{j_k+1}) \le \varsigma^{j_k+1}, \quad \forall j_k \ge J_0.
\end{equation}
Consequently, the penalty parameter stabilizes at a finite constant,
\begin{equation} \label{eq:lambda_stabilize}
    \exists \lambda^* < \infty \text{ s.t. } \lambda^{j_k} = \lambda^*, \quad \forall j_k \ge J_0.
\end{equation}

Next, consider the implicit multipliers derived from the smoothing function gradient,
\begin{equation}
    \gamma_m^j \triangleq \frac{\partial \mathcal{S}(g_m(\bm{\Psi}^j))}{\partial g_m}
    = \frac{\exp(g_m(\bm{\Psi}^j)/\mu^j)}{1 + \exp(g_m(\bm{\Psi}^j)/\mu^j)}.\label{use64}
\end{equation}
Examining the Riemannian gradient of $\phi$ reveals the effective multiplier structure,
\begin{equation}
    \operatorname{grad} \phi(\bm{\Psi}^j) = \operatorname{grad} f_0(\bm{\Psi}^j) + \sum_{m \in \mathcal{C}} \underbrace{\lambda^j \gamma_m^j}_{\text{eff. mult.}} \operatorname{grad} g_m(\bm{\Psi}^j).
\end{equation}
The function in (\ref{use64}) strictly bounds the vector $\bm{\gamma}^j$ within the unit hypercube,
\begin{equation}
    \bm{\gamma}^j \in (0, 1)^{|\mathcal{C}|} \subset [0, 1]^{|\mathcal{C}|}, \quad \forall j.
\end{equation}
Since the closure $[0, 1]^{|\mathcal{C}|}$ is compact, applying Bolzano--Weierstrass yields a convergent subsequence,
\begin{equation} \label{eq:gamma_limit}
    \lim_{k \to \infty} \bm{\gamma}^{j_k} = \bm{\gamma}^* \in [0, 1]^{|\mathcal{C}|}.
\end{equation}
Combining \eqref{eq:lambda_stabilize} and \eqref{eq:gamma_limit}, the effective multipliers converge,
\begin{equation} \label{eq:nu_converge}
    \nu_m^{j_k} \triangleq \lambda^{j_k} \gamma_m^{j_k} \xrightarrow{k \to \infty} \nu_m^* \triangleq \lambda^* \gamma_m^*, \quad \forall m \in \mathcal{C},
\end{equation}
which ensures dual feasibility,
\begin{equation} \label{eq:dual_feas}
    \lambda^* > 0 \land \gamma_m^* \ge 0 \implies \nu_m^* \ge 0.
\end{equation}

\subsection*{2. Complementary Slackness}
We verify $\nu_m^* g_m(\bm{\Psi}^*) = 0$ by partitioning $\mathcal{C}$ into active ($\mathcal{A}$) and inactive ($\mathcal{I}$) sets at $\bm{\Psi}^*$,
\begin{equation}
    \mathcal{A} \triangleq \{m \in \mathcal{C} \mid g_m(\bm{\Psi}^*) = 0\}, \quad 
    \mathcal{I} \triangleq \{m \in \mathcal{C} \mid g_m(\bm{\Psi}^*) < 0\}.
\end{equation}

\subsubsection*{Case 1: Active Constraints ($m \in \mathcal{A}$)}
For $m \in \mathcal{A}$, the condition holds trivially,
\begin{equation}
    \nu_m^* g_m(\bm{\Psi}^*) = \nu_m^* \cdot 0 = 0.
\end{equation}

\subsubsection*{Case 2: Inactive Constraints ($m \in \mathcal{I}$)}
For $m \in \mathcal{I}$, strict feasibility holds: $g_m(\bm{\Psi}^*) < 0$. 
Since each $g_m(\cdot)$ is continuous and $\bm{\Psi}^{j_k} \to \bm{\Psi}^*$, there exists $\delta > 0$ and $K_0$ such that for all $k>K_0$,
\begin{equation}
    g_m(\bm{\Psi}^{j_k}) \le -\delta < 0.
\end{equation}
As $\mu^{j_k} \to 0$, the argument of the exponential term diverges to negative infinity,
\begin{equation}
    \frac{g_m(\bm{\Psi}^{j_k})}{\mu^{j_k}} \le \frac{-\delta}{\mu^{j_k}} \xrightarrow{k \to \infty} -\infty.
\end{equation}
This forces the implicit multiplier to vanish,
\begin{equation}
    \gamma_m^* = \lim_{k \to \infty} 
    \frac{\exp\left( \frac{g_m(\bm{\Psi}^{j_k})}{\mu^{j_k}} \right)}
    {1 + \exp\left( \frac{g_m(\bm{\Psi}^{j_k})}{\mu^{j_k}} \right)} = 0.
\end{equation}
Given that $\lambda^*$ is bounded in \eqref{eq:lambda_stabilize}, the effective multiplier converges to zero,
\begin{equation}
    \nu_m^* = \lambda^* \gamma_m^* = 0.
\end{equation}
Thus, $\nu_m^* g_m(\bm{\Psi}^*) = 0$, proving complementary slackness.

\subsection*{3. Stationarity}
Finally, we establish stationarity. Define the KKT residual vector $v \in T_{\bm{\Psi}^*}\mathcal{M}$ at the limit point as,
\begin{equation}
    v \triangleq \operatorname{grad} f_0(\bm{\Psi}^*) + \sum_{m \in \mathcal{C}} \nu_m^* \operatorname{grad} g_m(\bm{\Psi}^*).
\end{equation}
To prove $\|v\|_F = 0$, we introduce the parallel transport to map gradients from $T_{\bm{\Psi}^{j_k}}\mathcal{M}$ to $T_{\bm{\Psi}^*}\mathcal{M}$.
Since $\bm{\Psi}^{j_k} \to \bm{\Psi}^*$, for sufficiently large $k$, the iterate falls within the injectivity radius of the limit point \cite{absil2008optimization},
\begin{equation}
    \operatorname{dist}(\bm{\Psi}^{j_k}, \bm{\Psi}^*) < \operatorname{inj}(\bm{\Psi}^*).
\end{equation}
This condition guarantees the existence of a unique minimizing geodesic connecting $\bm{\Psi}^{j_k}$ and $\bm{\Psi}^*$. Let $\mathcal{P}_{k \to *}$ denote the parallel transport operator along this unique geodesic. Leveraging this uniquely operator, we apply the triangle inequality,
\begin{equation} \label{eq:v_triangle}
    \begin{aligned}
    \|v\|_F \le & \underbrace{\left\| v - \mathcal{P}_{k \to *} \left( \operatorname{grad}\phi(\bm{\Psi}^{j_k}) \right) \right\|_F}_{\text{Term A}} \\
    & + \underbrace{\left\| \mathcal{P}_{k \to *} \left( \operatorname{grad}\phi(\bm{\Psi}^{j_k}) \right) \right\|_F}_{\text{Term B}}.
    \end{aligned}
\end{equation}
We analyze the limiting behavior as $k \to \infty$.

\subsubsection*{Term B }
Since the parallel transport operator $\mathcal{P}_{k \to *}$ is an isometry \cite{absil2008optimization}, it preserves the Frobenius norm,
\begin{equation}
    \text{Term B} 
    = \left\| \mathcal{P}_{k \to *} \left( \operatorname{grad}\phi(\bm{\Psi}^{j_k}) \right) \right\|_F 
    = \|\operatorname{grad}\phi(\bm{\Psi}^{j_k})\|_F.
\end{equation}
By the inner-loop stopping criterion in Algorithm~\ref{alg:1}, the iterate $\bm{\Psi}^{j_k}$ satisfies $\|\operatorname{grad}\phi(\bm{\Psi}^{j_k})\|_F =0$ in Theorem \ref{appendixB}, we have,
\begin{equation}
    \lim_{k\to\infty}\|\operatorname{grad}\phi(\bm{\Psi}^{j_k})\|_F = 0 
     \implies 
    \lim_{k\to\infty}\text{Term B} = 0.
\end{equation}

\subsubsection*{Term A }
Expanding Term A via the linearity of parallel transport,
\begin{equation}
    \begin{aligned}
    &\text{Term A} = \|(\operatorname{grad} f_0(\bm{\Psi}^*) + \sum_{m \in \mathcal{C}} \nu_m^* \operatorname{grad} g_m(\bm{\Psi}^*))  \\
    &\quad  - \mathcal{P}_{k \to *} (\operatorname{grad} f_0(\bm{\Psi}^{j_k}) + \sum_{m \in \mathcal{C}} \nu_m^{j_k} \operatorname{grad} g_m(\bm{\Psi}^{j_k})) \|_F.
    \end{aligned}
\end{equation}
Using the triangle inequality to separate the objective and constraint terms,
\begin{equation} \label{eq:termA_split}
    \begin{aligned}
    &\text{Term A} \le \underbrace{\left\| \operatorname{grad} f_0(\bm{\Psi}^*) - \mathcal{P}_{k \to *} (\operatorname{grad} f_0(\bm{\Psi}^{j_k})) \right\|_F}_{\Delta_f} \\
    &\quad + \sum_{m \in \mathcal{C}} \underbrace{\left\| \nu_m^* \operatorname{grad} g_m(\bm{\Psi}^*) - \nu_m^{j_k} \mathcal{P}_{k \to *} (\operatorname{grad} g_m(\bm{\Psi}^{j_k})) \right\|_F}_{\Delta_{g,m}}.
    \end{aligned}
\end{equation}
Since $f_0$ is continuously differentiable, $\operatorname{grad} f_0$ is a continuous vector field. According to the continuity property of vector fields under parallel transport along minimizing geodesics \cite[Lemma A.2]{liu2020simple},
\begin{equation}
    \lim_{k \to \infty} \Delta_f = 0.
\end{equation}
For $\Delta_{g,m}$, adding and subtracting $\nu_m^* \mathcal{P}_{k \to *} (\operatorname{grad} g_m(\bm{\Psi}^{j_k}))$ decouples the convergence,
\begin{equation}
    \begin{aligned}
    \Delta_{g,m} &= \left\| \nu_m^* (\operatorname{grad} g_m(\bm{\Psi}^*) - \mathcal{P}_{k \to *} \operatorname{grad} g_m(\bm{\Psi}^{j_k})) \right. \\
    &\quad \left. + (\nu_m^* - \nu_m^{j_k}) \mathcal{P}_{k \to *} \operatorname{grad} g_m(\bm{\Psi}^{j_k}) \right\|_F \\
    &\le |\nu_m^*| \cdot \left\| \operatorname{grad} g_m(\bm{\Psi}^*) - \mathcal{P}_{k \to *} \operatorname{grad} g_m(\bm{\Psi}^{j_k}) \right\|_F \\
    &\quad + |\nu_m^* - \nu_m^{j_k}| \cdot \|\operatorname{grad} g_m(\bm{\Psi}^{j_k})\|_F.
    \end{aligned}
\end{equation}
The first term tends to $0$ by the continuity of $\operatorname{grad} g_m$ and \cite[Lemma A.2]{liu2020simple}. The second term tends to $0$ since $|\nu_m^* - \nu_m^{j_k}| \to 0$ by \eqref{eq:nu_converge} and the conclusion in Theorem \ref{appendixB}. Therefore, $\lim_{k \to \infty} \Delta_{g,m} = 0$ for each $m$. Substituting these limits into \eqref{eq:v_triangle} yields $\|v\|_F = 0$, confirming stationarity.

In conclusion, since $\bm{\Psi}^*$ satisfies primal feasibility, dual feasibility, complementary slackness, and stationarity, it is a KKT point of the original problem. This completes the proof. \hfill $\blacksquare$

\bibliographystyle{IEEEtran}
\bibliography{Bibliography}

\vfill

\end{document}